\def\draft{0}
\newtheorem{theorem}{Theorem}
\newtheorem{lemma}[theorem]{Lemma}
\newtheorem{definition}[theorem]{Definition}
\newtheorem{claim}[theorem]{Claim}
\def\fnum@figure{{\bf Figure \thefigure}}
\def\fnum@table{{\bf Table \thetable}}
\long\def\@mycaption#1[#2]#3{\addcontentsline{\csname
  ext@#1\endcsname}{#1}{\protect\numberline{\csname
  the#1\endcsname}{\ignorespaces #2}}\par
  \begingroup
    \@parboxrestore
    \small
    \@makecaption{\csname fnum@#1\endcsname}{\ignorespaces #3}\par
  \endgroup}
\def\mycaption{\refstepcounter\@captype \@dblarg{\@mycaption\@captype}}
\newcommand{\mathify}[1]{\ifmmode{#1}\else\mbox{$#1$}\fi}
\newcommand{\bigO}O
\newcommand{\F}{{\mathbb F}}
\newcommand{\eqdef}{{\stackrel{\rm def}{=}}}
\def\blfootnote{\xdef\@thefnmark{}\@footnotetext}
\newcommand{\supp}{{\mathrm {supp}}}
\newcommand{\eps}{\varepsilon}
\renewcommand{\epsilon}{\varepsilon}
\renewcommand{\le}{\leqslant}
\renewcommand{\leq}{\leqslant}
\renewcommand{\ge}{\geqslant}
\renewcommand{\geq}{\geqslant}
\newcommand{\lin}{\mathsf{lin}}
\newcommand{\mynote}[2]{\marginpar{\tiny {\bf {#2}:} \sf {#1}}}
\newcommand{\vnote}[1]{\mynote{#1}{VG}}
\newcommand{\snote}[1]{\mynote{#1}{SK}}
\newcommand{\jnote}[1]{\mynote{#1}{JH}}
\newcommand{\vnote}[1]{}
\newcommand{\snote}[1]{}
\newcommand{\jnote}[1]{}
\title{On the List-Decodability of Random Linear Codes}
\author{{\sc Venkatesan Guruswami}\thanks{Computer Science Department, Carnegie Mellon University. {\tt guruswami@cmu.edu}. Supported in part by NSF CCF 0953155 and a Packard Fellowship.}
\and {\sc Johan H{\aa}stad}\thanks{School of Computer Science and Communication, KTH. {\tt johanh@csc.kth.se}. Research supported by ERC grant 226203.}
\and {\sc Swastik Kopparty}\thanks{CSAIL, MIT. {\tt swastik@mit.edu.} Work was partially done while the author
was an intern at Microsoft Research, New England.}
}
\begin{document}

\maketitle
\thispagestyle{empty}
\begin{abstract}

  For every fixed finite field $\F_q$, $p \in (0,1-1/q)$ and $\varepsilon >
  0$, we prove that with high probability a random subspace $C$ of
  $\F_q^n$ of dimension $(1-H_q(p)-\varepsilon)n$ has the
  property that every Hamming ball of radius $pn$ has at most
  $O(1/\varepsilon)$ codewords.

  This answers a basic open question concerning the list-decodability
  of linear codes, showing that a list size of $O(1/\varepsilon)$
  suffices to have rate within $\varepsilon$ of the ``capacity''
  $1-H_q(p)$. This matches up to constant factors the list-size
  achieved by general random codes, and gives an exponential improvement
  over the best previously known list-size bound of $q^{O(1/\varepsilon)}$.

  The main technical ingredient in our proof is a strong upper bound
  on the probability that $\ell$ random vectors chosen from a Hamming
  ball centered at the origin have too many (more than $\Theta(\ell)$)
  vectors from their linear span also belong to the ball.

\end{abstract}

\newpage

\section{Introduction}

One of the central problems in coding theory is to understand the
trade-off between the redundancy built into codewords (aka the rate of
the code) and the fraction of errors the code enables correcting.
Suppose we are interested in codes over the binary alphabet (for
concreteness) that enable recovery of the correct codeword $c \in
\{0,1\}^n$ from any noisy received word $r$ that differs from $c$ in
at most $pn$ locations.  For each $c$, there are about
${n \choose
  {pn}} \approx 2^{H(p) n}$ such possible received words $r$, where
$H(x) = -x \log_2 x - (1-x) \log_2(1-x)$ stands for the binary entropy
function. 
\vnote{I removed the footnote on approximate equality, as in the proofs we use ${n \choose {pn}} \le 2^{H(p) n}$ which is true without fudging.}
Now for each such $r$, the error-recovery procedure must
identify $c$ as a possible choice for the true codeword. In fact, even if
the errors are randomly distributed and not worst-case, the algorithm
must identify $c$ as a candidate codeword for {most} of these $2^{H(p)
  n}$ received words, if we seek a low decoding error probability.
This implies that there can be at most $\approx
2^{(1-H(p))n}$ codewords, or equivalently the largest rate $R$ of the
code one can hope for is $1-H(p)$.

If we could pack about $2^{(1-H(p))n}$ pairwise disjoint Hamming balls
of radius $p n$ in $\{0,1\}^n$, then one can achieve a rate
approaching $1-H(p)$ while guaranteeing correct and unambiguous
recovery of the codeword from an arbitrary fraction $p$ of
errors. Unfortunately, it is well known that such an asymptotic
``perfect packing'' of Hamming balls in $\{0,1\}^n$ does not exist,
and the largest size of such a packing is at most $2^{(\alpha(p)+o(1))
  n}$ for $\alpha(p) < 1-H(p)$ (in fact $\alpha(p)=0$ for $p \ge 1/4$).
Nevertheless, it turns out that it is possible to pack
$2^{(1-H(p)-\eps)n }$ such Hamming balls such that no $O(1/\eps)$ of
them intersect at a point, for any $\eps > 0$. In fact a random
packing has such a property with high probability.

\noindent {\bf List Decoding.}
This fact implies that it is possible to achieve rate approaching the
optimal $1-H(p)$ bound for correcting a fraction $p$ of {\em
  worst-case} errors in a model called {\em list decoding}. List
decoding, which was introduced independently by Elias and Wonzencraft
in the 1950s~\cite{elias,wozencraft}, is an error-recovery model where
the decoder is allowed to output a small list of candidate codewords
that must include all codewords within Hamming distance $pn$ of the
received word. Note that if at most $pn$ errors occur, the list
decoder's output will include the correct codeword. In addition to the
rate $R$ of the code and the error fraction $p$, list decoding has an
important third parameter, the ``list-size,'' which is the largest
number $L$ of codewords the decoder is allowed to output on any
received word. The list-size thus bounds the maximum ambiguity in the
output of the decoder.

For codes over an alphabet of size $q$, all the above statements hold
with $H(p)$ replaced by $H_q(p)$, where $H_q(x) = x \log_q(q-1) - x
\log_q x - (1-x) \log_q(1-x)$ is the $q$-ary entropy function.

\begin{definition}[Combinatorial list decodability property]
Let $\Sigma$ be a finite alphabet of size $q$, $L \ge 1$ an integer, and $p \in (0,1-1/q)$. 
A code $C \subseteq \Sigma^n$ is said to be $(p,L)$-list-decodable, if for every $x \in \Sigma^n$, there are at most $L$ codewords of $C$ that are at Hamming distance $pn$ or less from $x$. Formally, $|B_n^q(x,p) \cap C| \le L$ for every $x$, where $B^q_n( x, p) \subseteq \Sigma^n$ is the ball of radius $pn$ centered at $x \in \{0,1\}^n$. 
\end{definition}
We restrict $p < 1-1/q$ in the above definition since a random string
differs from each codeword in at most a fraction $1-1/q$ of positions,
and so over alphabet size $q$ decoding from a fraction $1-1/q$ or more
errors is impossible (except for trivial codes).

\noindent {\bf Combinatorics of list decoding.}
A fundamental question in list decoding is to understand the trade-off
between rate, error-fraction, and list-size. For example, what list-size suffices if we want codes of rate within $\eps$ of the optimal $1-H_q(p)$ bound? That is, if we define $L_{q,p}(\eps)$ to be the minimum integer $L$ for which there are $q$-ary $(p,L)$-list-decodable codes of rate at least $1-H_q(p)-\eps$ for infinitely many lengths $n$, how does $L_{q,p}(\eps)$ behave for small $\eps$ (as we keep the alphabet size $q$ and $p \in (0,1-1/q)$ fixed)? 

It is known that unbounded list-size is needed as one approaches the
optimal rate of $1-H_q(p)$. In other words, $L_{q,p}(\eps) \to \infty$
as $\eps \to 0$. This was shown for the binary case in
\cite{blinovsky}, and his result implicitly implies $L_{2,p}(\eps)\ge
\Omega(\log (1/\eps))$ (see \cite{atri-cocoon} for an explicit
derivation of this). For the $q$-ary case, $L_{q,p}(\eps) =
\omega_\eps(1)$ was shown in \cite{blin-q-ary,blin-convexity}.  In the
language of list-decoding, the above-mentioned result on
``almost-disjoint'' sphere packing states that for large enough block lengths, a random code of rate
$1-H_q(p)-\eps$ is $(p,\frac{1}{\eps})$-list-decodable with high
probability. In other words, $L_{q,p}(\eps) \le 1/\eps$. This
result appears in \cite{elias91} (and is based on a previous random
coding argument for linear codes from \cite{ZP81}). The result is
explicitly stated in \cite{elias91} only for $q=2$, but trivially
extends for arbitrary alphabet size $q$.  This result is also tight,
in the sense that with high probability a {\em random} code of rate
$1-H_q(p)-\eps$ is {\em not} $(p,c_{p,q}/\eps)$-list-decodable
w.h.p. for some constant $c_{p,q}>0$~\cite{atri-cocoon}.

An interesting question is to close the exponential gap in the lower and upper bounds on $L_{2,p}(\eps)$, and more generally pin down the asymptotic behavior of $L_{q,p}(\eps)$ for every $q$. The upper bound of $O(1/\eps)$ is perhaps closer to the truth, and it is probably the lower bound that needs strengthening. 

\noindent {\bf Context of this work.}
In this work, we address another fundamental combinatorial question
concerning list-decodable codes, namely the behavior of $L_{q,p}(\eps)$
when restricted to {\em linear codes}. For $q$ a prime power, a
$q$-ary linear code is simply a subspace of $\F_q^n$ ($\F_q$ being the
field of size $q$). 

Most of the well-studied and practically used codes are linear codes.
Linear codes admit a succinct representation in terms of its basis
(called generator matrix). This aids in finding and representing such
codes efficiently, and as a result linear codes are often useful as
``inner'' codes in concatenated code constructions.  

In a linear code,
by translation invariance, the neighborhood of every codeword looks
the same, and this is often a very useful symmetry property. For
instance, this property was recently used in \cite{GS-additive} to
give a black-box conversion of linear list-decodable codes to codes
achieving capacity against a worst-case additive channel (the
linearity of the list-decodable code is crucial for this connection).
Lastly, list-decodability of linear codes brings to the fore some
intriguing questions on the interplay between the geometry
of linear spaces and Hamming balls, and is therefore interesting in
its own right.  For these and several other reasons, it is desirable
to achieve good trade-offs for list decoding via linear codes.

Since linear codes are a highly structured subclass of all codes,
proving the existence of linear codes with list-decodability
properties similar to general codes can be viewed as a strong
``derandomization'' of the random coding argument used to construct
good list-decodable codes. A derandomized family of codes called
``pseudolinear codes'' were put forth in \cite{GI-focs01} since linear
codes were not known to have strong enough list-decoding properties.
Indeed, prior to this work, the results known for linear codes were
substantially weaker than for general codes (we discuss the details
next). {\em Closing this gap is the main motivation behind this
  work.}

\noindent {\bf Status of list-decodability of linear codes.}
Zyablov and Pinsker proved that a random binary linear code of rate
$1-H(p)-\eps$ is $(p,2^{O(1/\eps)})$-list-decodable with high
probability~\cite{ZP81}. The proof extends in a straightforward way to
linear codes over $\F_q$, giving list-size $q^{O(1/\eps)}$ for rate
$1-H_q(p)-\eps$.  Let us define $L^{\lin}_{q,p}(\eps)$ to be the
minimum integer $L$ for which there is an infinite family of
$(p,L)$-list-decodable linear codes over $\F_q$ of rate at least
$1-H_q(p)-\eps$.  The results of \cite{ZP81} thus imply that
$L^{\lin}_{q,p}(\eps) \le \exp(O_q(1/\eps))$.

Note that this bound is {\em exponentially worse} than the $O(1/\eps)$
bound known for general codes. In \cite{elias91}, Elias mentions the
following as the most obvious problem left open left by the random
coding results: {\em Is the requirement of the much larger list size for
linear codes inherent, or can one achieve list-size closer to the
$O(1/\eps)$ bound for general random codes?}

 For the {\em binary} case, the {\em
  existence} of $(p,L)$-list-decodable linear codes of rate at least
$1-H(p)-1/L$ is proven in \cite{GHSZ}. This implies that
$L_{2,p}^{\lin} \le 1/\eps$. There are some results which obtain lower
bounds on the rate for the case of small fixed list-size (at most
$3$)~\cite{blinovsky,blin-book,wei-feng}; these bounds are complicated
and not easily stated, and as noted in \cite{blin2}, are weaker for the
linear case for list-size as small as $5$.

The proof in \cite{GHSZ} is based on a carefully designed potential
function that quantifies list-decodability, and uses the
``semi-random'' method to successively pick good basis vectors for the
code.  The proof only guarantees that such binary linear codes exist
with positive probability, and does not yield a high probability
guarantee for the claimed list-decodability property.
Further, the proof relies crucially on the binary alphabet and
extending it to work for larger alphabets (or even the ternary case)
has resisted all attempts. Thus, for $q > 2$, $L_{q,p}(\eps) \le
\exp(O_q(1/\eps))$ remained the best known upper bound on list-size.
A high probability result for the binary case, and an upper bound of
$L_{q,p}(\eps) \le O(1/\eps)$ for $\F_q$-linear codes, were
conjectured in \cite[Chap. 5]{G-thesis}.

\noindent {\bf Our contribution.} In this work, we resolve the above
open question concerning list-decodability of linear codes over {\em
  all} alphabets. In particular, we prove that $L^{\lin}_{q,p}(\eps)
\le C_{q,p}/\eps$ for a constant $C_{q,p} < \infty$. Up to constant factors,
this matches the best known result for general, non-linear
codes. Further, our result in fact shows that a random $\F_q$-linear code of
rate $1-H_q(p)-\eps$ is $(p,C_{p,q}/\eps))$-list-decodable {\em with
  high probability}. This was not known
even for the case $q=2$. The high probability claim implies an
efficient randomized Monte Carlo construction of such list-decodable
codes.

We now briefly explain the difficulty in obtaining good bounds for
list-decoding linear codes and how we circumvent it. This is just a
high level description; see the next section for a more technical
description of our proof method.

Let us recall the straightforward random coding method that shows the
list-decodability of random (binary) codes. We pick a code $C
\subseteq \{0,1\}^n$ by uniformly and independently picking $M=2^{Rn}$
codewords. To prove it is $(p,L)$-list-decodable, we fix a center $y$
and a subset $S$ of $(L+1)$ codewords of $C$. Since these codewords are
independent, the probability that all of them land in the ball of radius
$pn$ around $y$ is at most $\bigl(\frac{2^{H(p)
    n}}{2^n}\bigr)^{L+1}$. A union bound over all $2^n$ choices of $y$
and at most $M^{L+1}$ choices of $S$ shows that if $R \le 1-H(p) - 1/L$,
the code fails to be $(p,L)$-list-decodable with probability at most
$2^{-\Omega(n)}$.

Attempting a similar argument in the case of random linear codes,
defined by a random linear map $A : \F_2^{Rn} \rightarrow \F_2^n$,
faces several immediate obstacles.  The $2^{Rn}$ codewords of a random
linear code are not independent of one another; in fact the points of
such a code are highly correlated and not even $3$-wise independent
(as $A(x+y) = Ax+Ay$). However, any $(L+1)$ distinct codewords
$Ax_1,Ax_2,\dots,Ax_{L+1}$ must contain a subset of $\ell \ge
\log_2(L+1)$ independent codewords, corresponding to a subset
$\{x_{i_1}, \dots, x_{i_\ell}\}$ of {\em linearly independent}
message vectors. This lets one mimic the argument for the random code case
with $\log_2(L+1)$ playing the role of $L+1$. However, as a result, it
leads to the exponentially worse list-size bounds.

To get a better result, we somehow need to control the ``damage''
caused by subsets of codewords of low rank. This is the crux of our
new proof. Stated loosely and somewhat imprecisely, we prove a strong
upper bound on the fraction of such low rank subsets, by proving that
if we pick $\ell$ random vectors from the Hamming ball $B_n(0,p)$ (for
some constant $\ell$ related to our target list-size $L$), it is
rather unlikely that more than $\Theta(\ell)$ of the $2^\ell$ vectors
in their span will also belong to the ball $B_n(0,p)$.  (See
Theorem~\ref{thm:cbound} for the precise statement.) This ``limited
correlation'' between linear subspaces and Hamming balls is the main
technical ingredient in our proof. It seems like a basic and powerful
probabilistic fact that might find other applications. The argument
also extends to linear codes over $\F_q$ after some adaptations.

\section{Results and Methods}

Our main result is that random linear codes in $\F_2^n$ of rate $1 - H(p)- \epsilon$ can be list-decoded
from $p$-fraction errors with list-size only $O(\frac{1}{\epsilon})$. We also show the analogous result
for random $q$-ary linear codes.

\begin{theorem}
\label{thm:main}
Let $p \in (0, 1/2)$. Then there exist constants $C_p, \delta > 0$, such that
for all $\epsilon > 0$ and all large enough integers $n$, letting $R = 1 - H(p) - \epsilon$, if $\mathcal C \subseteq \F_2^n$
is a random linear code of rate $R$, then
$$\Pr[\mathcal C \mbox{ is $(p, \frac{C_p}{\epsilon})$-list-decodable}] > 1 - 2^{-\delta n}.$$
\end{theorem}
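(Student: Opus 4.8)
The plan is to reduce the list-decodability of $\mathcal C$ to a statement about its cosets, recast the failure event as the existence of a special low-dimensional affine subspace whose direction space $\mathcal C$ contains, and then run a first-moment argument over such subspaces in which Theorem~\ref{thm:cbound} supplies exactly the savings needed for the low-rank configurations. Write $B:=B_n(0,p)$, $k:=\lfloor Rn\rfloor$ (so $n-k\ge(H(p)+\epsilon)n$), and $L:=\lfloor C_p/\epsilon\rfloor$; we may assume $\epsilon<1-H(p)$, since otherwise the statement is trivial.

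First I would record the elementary translation: since $\mathcal C=-\mathcal C$, for every $y\in\F_2^n$ we have $|B_n(y,p)\cap\mathcal C|=|(y+\mathcal C)\cap B|$, so $\mathcal C$ fails to be $(p,L)$-list-decodable precisely when some coset of $\mathcal C$ contains $L+1$ distinct vectors $f_0,\dots,f_L$ of Hamming weight at most $pn$. Given such $f_i$, let $E$ be their affine span, $r:=\dim E$, and $\vec E:=E-f_0=\mathrm{span}(f_1-f_0,\dots,f_L-f_0)$ its direction space. Then (i) $\vec E\subseteq\mathcal C$, since differences of two vectors in a common coset of $\mathcal C$ lie in $\mathcal C$; (ii) $\lceil\log_2(L+1)\rceil\le\dim\vec E=r\le L$, since the affine span of $L+1$ distinct points has $2^r\ge L+1$ points and dimension at most $L$; (iii) $E$ is the affine span of points of $B$; and (iv) $|E\cap B|\ge L+1$. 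Hence the failure event lies in $\bigcup_{r=\lceil\log_2(L+1)\rceil}^{L}\mathcal E_r$, where $\mathcal E_r$ is the event that some $r$-dimensional affine subspace $E$, affinely spanned by points of $B$ and with $|E\cap B|\ge L+1$, has $\vec E\subseteq\mathcal C$.

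Next I would estimate $\Pr[\mathcal E_r]$ by a union bound over the eligible subspaces. For a fixed $r$-dimensional linear subspace $U$, $\Pr[U\subseteq\mathcal C]\le 2\cdot 2^{-r(n-k)}$ for $n$ large, so $\Pr[\mathcal E_r]\le 2\cdot 2^{-r(n-k)}\cdot N_r$, where $N_r$ is the number of eligible $E$. Each eligible $E$ is the affine span of at least one tuple $(g_0,\dots,g_r)\in B^{r+1}$ of affinely independent points, distinct $E$ have disjoint sets of such tuples, and an affine span is contained in the corresponding linear span; therefore
$$N_r\ \le\ \#\bigl\{(g_0,\dots,g_r)\in B^{r+1}\ :\ |\mathrm{span}(g_0,\dots,g_r)\cap B|\ge L+1\bigr\}\ =\ |B|^{r+1}\cdot p_r,$$
where $p_r:=\Pr_{g_0,\dots,g_r\sim B}\bigl[\,|\mathrm{span}(g_0,\dots,g_r)\cap B|\ge L+1\,\bigr]$ with the $g_i$ independent and uniform in $B$. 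Using $|B|\le 2^{H(p)n}$ and $n-k\ge(H(p)+\epsilon)n$,
$$\Pr[\mathcal E_r]\ \le\ 2\cdot 2^{-r(H(p)+\epsilon)n}\cdot 2^{(r+1)H(p)n}\cdot p_r\ =\ 2\cdot 2^{H(p)n-r\epsilon n}\cdot p_r .$$
Now split on $r$. If $r\ge(H(p)+1)/\epsilon$, then $2^{H(p)n-r\epsilon n}\le 2^{-n}$, so $p_r\le 1$ gives $\Pr[\mathcal E_r]\le 2^{-n+1}$. If $\lceil\log_2(L+1)\rceil\le r<(H(p)+1)/\epsilon$, then $(g_0,\dots,g_r)$ is a family of $\ell:=r+1\le(H(p)+1)/\epsilon+1$ random vectors drawn from $B$, and taking the constant $C_p$ large enough (in terms of $p$ and the implied constants of Theorem~\ref{thm:cbound}) makes $L+1$ exceed the threshold $\Theta(\ell)$ of that theorem by a wide margin, so $p_r\le 2^{-(H(p)+1)n}$ and $\Pr[\mathcal E_r]\le 2\cdot 2^{H(p)n}\cdot 2^{-(H(p)+1)n}=2^{-n+1}$. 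Either way $\Pr[\mathcal E_r]\le 2^{-n+1}$, and summing over the at most $L+1\le C_p/\epsilon+1$ values of $r$ yields $\Pr[\mathcal C\text{ not }(p,L)\text{-list-decodable}]\le(C_p/\epsilon+1)\,2^{-n+1}\le 2^{-n/2}$ once $n$ is large enough in terms of $\epsilon$; so $\delta=1/2$ works.

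The heart of the argument --- and the step I expect to be the main obstacle --- is the appeal to Theorem~\ref{thm:cbound}. Without it one could only bound $N_r$ by the trivial $|B|^{r+1}\le 2^{(r+1)H(p)n}$, giving $\Pr[\mathcal E_r]\le 2^{H(p)n-r\epsilon n}$, which is $2^{\Omega(n)}$ for the unavoidably small values $r\approx\log_2 L$ (note $H(p)-r\epsilon>0$ whenever $r<H(p)/\epsilon$). This is exactly the effect behind the exponentially weaker $q^{O(1/\epsilon)}$ list-size bounds of prior work: a cluster of $L+1$ codewords may span only a rank-$\ell$ space with $\ell\approx\log_2 L$, and counting all $\approx 2^{\ell H(p)n}$ rank-$\ell$ tuples of ball points as potential offenders loses an exponential factor. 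Theorem~\ref{thm:cbound} is precisely the statement that almost none of these tuples have more than $\Theta(\ell)$ of their $2^\ell$ span vectors back inside $B$, which collapses $N_r$ and lets the union bound close. Finally, it is worth noting why the tempting shortcut of recentering a cluster at one of its own codewords is useless here: the differences $f_i-f_0$ lie in $\mathcal C$ but only in $B_n(0,2p)$, and $|B_n(0,2p)|/2^n$ is not small once $p$ is close to $1/2$; the coset reformulation above is what keeps the analysis inside the radius-$p$ ball, where Theorem~\ref{thm:cbound} (stated for the ball centered at the origin) applies.
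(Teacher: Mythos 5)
Your proposal is correct and follows essentially the same route as the paper: both reduce the failure event to the existence of a low-dimensional span of ball points with $\vec E\subseteq\mathcal C$, split on the rank $r$ of that span, invoke Theorem~\ref{thm:cbound} to kill the count for the small-$r$ (the dangerous) range, and use the trivial volume bound plus the $2^{-r(n-k)}$ containment probability for large $r$. The only cosmetic deviations are that you union-bound directly over affine subspaces, whereas the paper averages over a random center $x$, passes to $\mathcal C^*=\mathcal C+\{0,x\}$, and union-bounds over tuples of linearly independent ball vectors; these are interchangeable reorganizations of the same argument.
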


The proof begins by simplifying the problem to its combinatorial
core. Specifically, we reduce the problem of studying the {\em
  list-decodability} of a random linear code of {\em linear} dimension
to the problem of studying the {\em weight-distribution} of certain
random linear codes of {\em constant} dimension.  The next theorem
analyzes the weight distribution of these constant-dimensional random
linear codes. The notation $B_n(x,p)$ refers to the Hamming ball of radius $pn$ centered at $x \in \F_2^n$.

\begin{theorem}[Span of random points in $B_n(0,p)$]
\label{thm:cbound}
For every $p \in ( 0, 1/2)$, there is a constant $C > 0$, such that
for all $n$ large enough and all $\ell = o(\sqrt{n})$, if $X_1,
\ldots, X_\ell$ are picked independently and uniformly at random from
$B_n(0, p)$, then
$$ \Pr [ | \mathrm{span}(\{X_1, \ldots, X_\ell\}) \cap B_n(0, p) | > C \cdot\ell] \leq 2^{-5n}.$$
\end{theorem}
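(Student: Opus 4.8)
The plan is to bound, for each possible "pattern" of how many span-vectors land in $B_n(0,p)$, the probability that such a pattern occurs, and then union-bound. Fix a target threshold $C\ell$. If the event in the theorem holds, then there is a set $T\subseteq\mathrm{span}(\{X_1,\dots,X_\ell\})$ with $|T|=\lceil C\ell\rceil$ consisting of nonzero vectors all lying in $B_n(0,p)$. Each element of $T$ is a fixed $\F_2$-linear combination $\sum_{i\in S}X_i$ for some nonempty $S\subseteq[\ell]$, so a pattern is specified by a choice of $\lceil C\ell\rceil$ distinct nonempty subsets $S_1,\dots,S_{\lceil C\ell\rceil}$ of $[\ell]$; there are at most $\binom{2^\ell}{\lceil C\ell\rceil}\le 2^{\ell\lceil C\ell\rceil}$ such choices. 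So it suffices to show that for any fixed such collection, the probability that $\sum_{i\in S_j}X_i\in B_n(0,p)$ for all $j$ is at most, say, $2^{-6n}$ (then absorbing the $2^{\ell\lceil C\ell\rceil}=2^{o(n)}$ union-bound factor leaves room to spare, using $\ell=o(\sqrt n)$).

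The heart of the argument is estimating $\Pr[\forall j:\ \sum_{i\in S_j}X_i\in B_n(0,p)]$ for fixed distinct nonempty $S_1,\dots,S_m$ ($m=\lceil C\ell\rceil$). I would do this coordinate by coordinate: write $X_i=(X_i^{(1)},\dots,X_i^{(n)})$; the coordinates across $i$ are not independent because of the weight constraint $\wt(X_i)\le pn$, but I would first pass to the slightly cleaner model where each $X_i$ has i.i.d.\ $\mathrm{Bernoulli}(p)$ coordinates — the Hamming-ball model and the Bernoulli model agree up to a $\poly(n)$ factor in probabilities (the ball of radius $pn$ carries a $1/\poly(n)$ fraction of the Bernoulli mass, since $p<1/2$), which is again absorbed. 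In the Bernoulli model the columns $(X_1^{(t)},\dots,X_\ell^{(t)})$ for $t=1,\dots,n$ are i.i.d. For each linear form $S_j$, the value $(\sum_{i\in S_j}X_i)^{(t)}=\bigoplus_{i\in S_j}X_i^{(t)}$ is a function of column $t$ alone; its marginal is $\mathrm{Bernoulli}(p_{|S_j|})$ where $p_k=\frac{1-(1-2p)^k}{2}\in[\,\text{something}<1/2\,]$ for $p<1/2$. The key point is that $\sum_{i\in S_j}X_i$ has expected weight $p_{|S_j|}n$, and — crucially — for "most" of the subsets $S_j$ this expected weight is bounded away from $pn$... but that is exactly what can fail when $|S_j|=1$. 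The right way to capture the gain is: think of the $m$ linear forms acting on each column, producing a random vector in $\F_2^m$ with some distribution $\mu$ on $\F_2^m$; the event asks that the empirical distribution of these $n$ i.i.d.\ samples from $\mu$ puts mass $\le p$ on $1$ in every one of the $m$ coordinates simultaneously. By a standard large-deviations / method-of-types bound, $\Pr[\text{event}]\le 2^{-n\cdot D}$ where $D$ is the minimum of $D(\nu\,\|\,\mu)$ over distributions $\nu$ on $\F_2^m$ whose $j$-th marginal is $\le p$ for all $j$. So everything reduces to a clean combinatorial-analytic claim: \emph{for any $m$ distinct nonempty subsets $S_1,\dots,S_m$ of $[\ell]$ with $m\ge C\ell$, this KL-divergence $D$ is at least $6$} (a constant depending only on $p$).

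The main obstacle — and where I expect the real work to be — is proving that lower bound on $D$, i.e.\ that no distribution $\nu$ on $\F_2^m$ can have all $m$ marginals $\le p$ while staying close (in relative entropy) to $\mu$ (the image of $\mathrm{Bernoulli}(p)^{\ell}$ under the $m$ linear forms). The intuition is that $\mu$ typically has many coordinates with marginal much larger than $p$ (whenever $|S_j|\ge 2$, $p_{|S_j|}>p$), so forcing those marginals down to $\le p$ costs a constant amount of divergence \emph{per such coordinate}, hence $\Omega(\text{number of }|S_j|\ge2)\cdot$const in total; and since there are at most $\ell$ singletons among the $m\ge C\ell$ sets, at least $(C-1)\ell$ of them have $|S_j|\ge 2$, making $D\ge\Omega((C-1)\ell)$, which exceeds $6$ once $C$ is a large enough constant. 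Making this rigorous requires care: the $m$ coordinates of $\mu$ are highly dependent, so one cannot naively sum per-coordinate divergences; I would instead lower-bound $D(\nu\|\mu)$ by grouping the coordinates or by an entropy/submodularity argument — e.g.\ pick a subcollection of the $S_j$ with $|S_j|\ge 2$ that is "spread out" (say, forming a sunflower-free or otherwise structured family via a greedy/pruning step) so that the corresponding coordinates of $\mu$ are sufficiently independent to make a per-coordinate bound go through, losing only a constant factor in the count. This structural extraction of a good subfamily of linear forms is the crux of the proof.
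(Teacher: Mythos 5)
You correctly reduce to bounding, for a fixed family of $\Theta(C\ell)$ subsets $S_j\subseteq[\ell]$, the probability that all the corresponding linear combinations land in $B_n(0,p)$, and you correctly locate the crux at extracting a structured subfamily of linear forms along which some per-step gain can compound. That is the paper's crux too. But your sketch stops there, and the intuition you offer for how the extraction should go is miscalibrated in two ways that matter.

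First, the estimate $D\geq\Omega((C-1)\ell)$ is false, and so is the hope of extracting $\Omega(\ell)$ ``useful'' forms. Take $S$ (the family of indicator vectors in $\F_2^\ell$) to be an affine subspace of dimension about $\log_2(C\ell)$ — this is possible since $|S|=\Theta(C\ell)$. Then the pushforward $\mu$ is supported on at most $C\ell$ points, $\nu=\delta_{\mathbf 0}$ is feasible (all marginals equal $0\le p$), and $D(\delta_{\mathbf 0}\|\mu)=\log\bigl(1/\mu(\mathbf 0)\bigr)=O(\log(C\ell))$. So the worst-case rate is $\Theta(\log(C\ell))$, not $\Omega(\ell)$: the many constraints with $|S_j|\ge2$ are so heavily dependent that they contribute essentially nothing beyond the $\log_2|S|$ degrees of freedom. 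What the paper actually extracts (Lemma~\ref{lem:cinc}, via Sauer--Shelah) is a $2$-increasing chain of length only about $\tfrac12\log(C/2)$ — a constant in $\ell$ — and that is enough: the union bounds over $S$ and over ball centers together cost $2^{n+o(n)}$, so one only needs the per-chain-step gain $\delta_p$ (Lemma~\ref{lem:deltap}) times the constant chain length to exceed roughly $6$, which holds once $C$ is large.

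Second, and more fundamentally, you need a translate, and your setup has no mechanism for one. If every nontrivial $S_j$ contains a common element, say $1\in S_j$, then $\sum_{i\in S_j}X_i\in B_n(0,p)$ rewrites as $\sum_{i\in S_j\setminus\{1\}}X_i\in B_n(X_1,p)$: the target ball center is itself random, and a fixed-center ``all marginals $\leq p$'' rate bound does not apply. The paper's Lemma~\ref{lem:cinc} finds a $w\in\F_2^\ell$ so that $S+w$ has the increasing chain, one then rewrites the event as ``there exists $y$ such that every $X_{v_j+w}\in B_n(y,p)$,'' pays a $2^n$ union bound over $y$, and uses the chain structure: conditioned on the $X_t$ with $t$ in earlier supports, $X_{v_j+w}$ involves two fresh coordinates, and Lemma~\ref{lem:deltap} gives a conditional factor $2^{-\delta_p n}$. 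This translate-plus-chain extraction is not a refinement but the missing combinatorial lemma your ``sunflower-free / greedy pruning'' step needs; without it the subfamily you hope to find does not, in general, exist for the untranslated family.
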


We now give a brief sketch of the proof of Theorem~\ref{thm:cbound}.
Index the elements of $\mathrm{span}(\{X_1, \ldots, X_\ell\})$ as
follows: for $v \in \F_2^{\ell}$, let $X_v$ denote the random vector
$\sum_{i=1}^{\ell} v_i X_i$.  Fix an arbitrary $S \subseteq \F_2^\ell$
of cardinality $C \cdot\ell$, and let us study the event $E_S$: that all
the vectors $(X_v)_{v \in S}$ lie in $B_n(0,p)$. If none of the events
$E_S$ occur, we know that $| \mathrm{span}(\{X_1, \ldots, X_\ell\})
\cap B_n(0, p) | \leq C\cdot\ell.$

The key technical step is a Ramsey-theoretic lemma (Lemma~\ref{lem:cinc}, stated below) which says
that large sets $S$ automatically have the property that some
translate of $S$ contains a certain structured subset 
(which we call an ``increasing chain'').
This structured subset allows us to give strong upper bounds on the probability that all the vectors
$(X_v)_{v \in S}$ lie in $B_n(0,p)$.
Applying this to each $S \subseteq \F_2^\ell$ of cardinality $C \ell$ and taking a union bound gives
Theorem~\ref{thm:cbound}.

To state the Ramsey-theoretic lemma (Lemma~\ref{lem:cinc}), we first define increasing chains. For a vector $v \in \F_2^{\ell}$, the {\em support} of $v$, denoted $\supp(v)$, is defined to be the set of its nonzero coordinates.

\begin{definition}
A sequence of vectors $v_1, \ldots, v_d \in \F_2^{\ell}$ is called an $c$-increasing chain of length $d$,
if for all $j \in [d]$, 
$$\left|\supp(v_j) \setminus \left(\bigcup_{i = 1}^{j-1}\supp(v_i)\right)\right| \geq c.$$
\end{definition}

We now state the Ramsey-theoretic lemma that plays the central role in Theorem~\ref{thm:cbound}.
The proof appears in Section~\ref{sec:cinc}, where it is proved using the Sauer-Shelah lemma.

\begin{lemma}
\label{lem:cinc}
For all positive integers $c,\ell$ and $L \le 2^\ell$, the following holds.
For every $S \subseteq
\F_2^\ell$ with $|S| = L$, there is a $w \in \F_2^\ell$ such that $S+ w$
has an $c$-increasing chain of length at least $\frac{1}{c}(\log
\frac{L}{2}) - (1 - \frac{1}{c})(\log \ell) $.
\end{lemma}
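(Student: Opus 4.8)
The plan is to reduce the problem to building a ``staircase'' of vectors inside $S$ and to produce that staircase greedily, invoking the Sauer--Shelah lemma once per step. Here is the structure I want: pairwise disjoint $c$-sets $G_1,\dots,G_m\subseteq[\ell]$, vectors $\gamma_i\in\F_2^{G_i}$, and $s_1,\dots,s_m\in S$ such that, for every $j$, $s_j$ restricted to $G_j$ equals the complement $\overline{\gamma_j}$, $s_j$ restricted to $G_k$ equals $\gamma_k$ for every $k>j$, and $s_j$ is unconstrained on $G_1,\dots,G_{j-1}$ and off $G:=G_1\cup\dots\cup G_m$. Given such data, set $w|_{G_i}:=\gamma_i$, $w:=0$ off $G$, and $v_j:=s_j+w\in S+w$. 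Then $v_j$ is all-ones on $G_j$, whereas for each $t<j$ the vector $v_t$ is all-zeros on $G_j$ (there $s_t$ agrees with $w$, since $G_j$ lies in the ``suffix'' of $s_t$); hence $G_j$ is a set of $c$ coordinates in $\supp(v_j)$ lying in none of $\supp(v_1),\dots,\supp(v_{j-1})$, so $v_1,\dots,v_m$ is a $c$-increasing chain of length $m$ in $S+w$.

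To build the staircase I would iterate \emph{backward}, producing the blocks in the order $H_1=G_m,\ H_2=G_{m-1},\dots$, and maintaining the affine slice $T_k=\{s\in S: s|_{H_i}=\delta_i\text{ for }i\le k\}$ (so $T_0=S$). Since all of $T_k$ agrees on $H_1\cup\dots\cup H_k$, the restriction map onto the remaining coordinates is injective on $T_k$, and it has image of size $|T_k|\ge L/2^{ck}$. As long as this exceeds $\binom{\ell}{\le c-1}$, Sauer--Shelah gives a $c$-set $H_{k+1}$ disjoint from $H_1,\dots,H_k$ that $T_k$ shatters; take $\delta_{k+1}$ to be a most frequent $H_{k+1}$-pattern in $T_k$ (so $|T_{k+1}|\ge|T_k|/2^c\ge L/2^{c(k+1)}$) and take $r_{k+1}\in T_k$ with $r_{k+1}|_{H_{k+1}}=\overline{\delta_{k+1}}$ (possible precisely because $T_k$ shatters $H_{k+1}$). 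Relabeling $G_j=H_{m-j+1}$, $\gamma_j=\delta_{m-j+1}$, $s_j=r_{m-j+1}$ then realizes all the staircase conditions — ``$s_j|_{G_k}=\gamma_k$ for $k>j$'' is exactly ``$r_i\in T_{i-1}$'', and ``$s_j|_{G_j}=\overline{\gamma_j}$'' is the defining property of $r_i$.

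The step $k\mapsto k+1$ succeeds while $L/2^{ck}>\binom{\ell}{\le c-1}$; since $\binom{\ell}{\le c-1}\le 2\ell^{c-1}$, this certainly holds for $k<\frac1c\log\frac{L}{2\ell^{c-1}}=\frac1c\log\frac L2-\bigl(1-\frac1c\bigr)\log\ell$, so the iteration runs for $m\ge\frac1c\log\frac L2-(1-\frac1c)\log\ell$ steps and outputs a chain of that length. The only side condition to check is that coordinates do not run out, i.e.\ $\ell-ck\ge c$ while the iteration is active; this holds because $L\le 2^\ell$ forces $m\le\ell/c$ once $\ell\ge2$, and $\ell=1$ is trivial (the stated bound is then nonpositive).

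The step I expect to be the real obstacle is getting the shape and orientation of the staircase right. The naive approach — peel off one weight-$\ge c$ vector $v_1$ and recurse on $S$ projected away from $\supp(v_1)$ — fails because the recursive call is free to choose its own translation while we need the one forced by $v_1$; in the same spirit, chain vectors that come out of a shattering argument are uncontrolled on coordinates that later blocks must use, and therefore collide with them. (A single application of Sauer--Shelah, yielding one shattered set of VC-dimension size, only gives a chain of length about $\tfrac1c\log_\ell L\ll\tfrac1c\log_2 L$, so the translation freedom is essential.) Pinning each witness $s_j$ on all the \emph{later} blocks $G_k$ ($k>j$) — which forces the backward direction of the iteration, so those blocks are already chosen when $s_j$ is selected — is exactly what removes the collisions; everything else (the binomial estimate, the injectivity and size bookkeeping, the floors in the final count) is routine.
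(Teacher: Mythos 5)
Your proof is correct and is, in essence, the paper's argument unrolled from a recursion on $\ell$ into an explicit backward iteration: the paper's inductive step (apply Sauer--Shelah to get a shattered $c$-set $U$, fix the majority pattern $u_0$ on $U$ to define $w$ there, take a complement-pattern witness $v_d$ as the \emph{last} chain element, and recurse on the projected majority fiber $S'$ over $[\ell]\setminus U$) is exactly one step of your loop, with $U, u_0, S'$ corresponding to your $H_{k+1}, \delta_{k+1}, T_{k+1}$. The backward block ordering $H_k = G_{m-k+1}$, the nested fibers $T_0 \supseteq T_1 \supseteq \cdots$, and the ``pin each witness on all later blocks'' structure are precisely what the paper's recursion produces when unwound, so the two proofs differ only in presentation.
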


\subsection{Larger alphabet}

Due to their geometric nature, our arguments generalize to the case of $q$-ary alphabet
(for arbitrary constant $q$) quite easily. Below we state our main theorem for the case of $q$-ary alphabet.

\begin{theorem}
\label{thm:qmain}
Let $q$ be a prime power and let $p \in (0, 1-1/q)$. Then there exist constants $C_{p,q}, \delta > 0$, such that
for all $\epsilon > 0$, letting $R = 1 - H_q(p) - \epsilon$, if $\mathcal C \subseteq \F_q^n$
is a random linear code of rate $R$, then
$$\Pr[\mathcal C \mbox{ is $(p, \frac{C_{p,q}}{\epsilon})$-list-decodable}] > 1 - 2^{-\delta n}.$$
\end{theorem}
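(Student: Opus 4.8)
The plan is to reprove Theorem~\ref{thm:qmain} by translating the proof of Theorem~\ref{thm:main} into the $q$-ary setting essentially line by line, replacing $\F_2$ by $\F_q$, the binary entropy $H$ by the $q$-ary entropy $H_q$, and $\{0,1\}^n$ by $\F_q^n$ throughout. As there, the argument has two layers: a combinatorial reduction of the list-decodability of a random $\F_q$-linear code of \emph{linear} dimension to the weight distribution of random $\F_q$-linear codes of \emph{constant} dimension, and then the $q$-ary analogue of Theorem~\ref{thm:cbound} controlling that weight distribution, which in turn rests on a $q$-ary analogue of Lemma~\ref{lem:cinc}. The only genuinely new work lies in re-establishing these two $q$-ary analogues; the rest is bookkeeping, and the one thing to watch is that union bounds over $\F_q^n$ now cost a factor $q^n=2^{(\log_2 q)n}$ instead of $2^n$, which is harmless since $q$ is fixed and all the tail bounds we prove are $2^{-\Omega(n)}$ with a hidden constant we are free to enlarge.

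For the reduction, I would run the proof of Theorem~\ref{thm:main} verbatim over $\F_q$: realizing a random rate-$R$ $\F_q$-linear code as the image of a uniformly random linear map $A\colon\F_q^{Rn}\to\F_q^n$, the failure of $(p,C_{p,q}/\epsilon)$-list-decodability forces, for some center $y\in\F_q^n$ and some constant $\ell\le C_{p,q}/\epsilon$, the existence of $\ell$ linearly independent messages whose $A$-images lie in $B^q_n(y,p)$ and whose $\F_q$-span meets $B^q_n(y,p)$ in more than $C\ell$ points, $C$ being the constant from the $q$-ary Theorem~\ref{thm:cbound} and $C_{p,q}$ taken large in terms of $C$. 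Since the $A$-images of linearly independent messages are independent and uniform on $\F_q^n$, conditioning on their membership in the ball makes them independent uniform points of $y+B^q_n(0,p)$, and the $q$-ary Theorem~\ref{thm:cbound} bounds the conditional probability that more than $C\ell$ of the corresponding span elements stay in the ball; a union bound over the $q^n$ centers, the constantly-many combinatorial patterns recording which combinations the surplus codewords are, and the $q^{O(n)}$ choices of the $\ell$ basis messages then closes the argument exactly as for $q=2$ (the passage from an arbitrary center $y$ to the origin-centered ball of Theorem~\ref{thm:cbound} uses the same device as in the binary proof and is unaffected by the larger alphabet).

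For the $q$-ary Theorem~\ref{thm:cbound}: with $X_1,\dots,X_\ell$ independent and uniform on $B^q_n(0,p)$ and $\ell=o(\sqrt n)$, index $\mathrm{span}(\{X_i\})$ by $v\in\F_q^\ell$ via $X_v=\sum_i v_iX_i$; for each $S\subseteq\F_q^\ell$ with $|S|=C\ell$ I would invoke the $q$-ary analogue of Lemma~\ref{lem:cinc} — proved by the $q$-ary analogue of the Sauer--Shelah lemma applied to the set system $\{\supp(v):v\in S\}$, the translate now ranging over $\F_q^\ell$ and the chain-length bound degrading only by $q$-dependent constants — to obtain a translate of $S$ containing a $c$-increasing chain of some length $d\ge 1$, where $c$ is a constant chosen in terms of $p$ and $q$. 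Revealing the $X_i$ along the chain's successive fresh supports and checking membership of the chain vectors in the ball one at a time bounds $\Pr[\forall v\in S:X_v\in B^q_n(0,p)]$ by $\exp(-\Omega_{p,q}(n)\cdot d)$; a union bound over the $\binom{q^\ell}{C\ell}=2^{o(n)}$ sets $S$ finishes, the constant in the exponent being as large as desired once $c$ is.

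I expect the one step needing real care — the only place the hypothesis $p<1-1/q$ is used, and where the ``geometric'' flavour of the argument does the work — to be the per-chain-step anti-concentration estimate: a nontrivial $\F_q$-linear combination of at least $c$ of the uniform ball vectors $X_i$, even after adding an arbitrary fixed offset coming from the previously revealed $X_i$, has Hamming weight that concentrates strictly above $pn$. The core computation is that for $c\ge 2$ the marginal distribution of a single coordinate of such a combination puts mass at most some $\rho_{c,p,q}<1-p$ on each value of $\F_q$ — which one verifies from the $q$-ary single-coordinate marginal of a uniform weight-$\le pn$ vector, and this is exactly the step where $p<1-1/q$ enters — so the weight has mean at least $(1-\rho_{c,p,q})n>pn$, and a bounded-differences concentration bound supplies the $\exp(-\Omega_{p,q}(n))$ per step. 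Over $\F_2$ this is a one-line parity computation; over $\F_q$ it is slightly more involved but still elementary, and it is the reason the whole argument ``generalizes quite easily''.
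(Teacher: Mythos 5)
Your overall architecture is right, and most of the reduction to the $q$-ary Theorem~\ref{thm:cbound} is indeed bookkeeping. But you have misidentified where the real work lies, and consequently there is a gap. The per-chain-step anti-concentration estimate you flag as ``the one step needing real care'' is, in the paper, stated without proof as standard (Lemma~\ref{lem:qdeltap}): for $c=2$, a coordinate of $X_{i_1}+X_{i_2}$ puts mass $(1-p)^2 + p^2/(q-1)$ on zero, which is strictly below $1-p$ precisely when $p<1-1/q$. Your computation of this is fine, but it is not the subtle point, and in particular you do not need $c$ to depend on $p$ and $q$; $c=2$ suffices throughout.

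The subtle step is exactly the one you wave off as ``degrading only by $q$-dependent constants'': the $q$-ary Sauer--Shelah analogue behind Lemma~\ref{lem:qcinc}. The standard generalization to $[q]^\ell$ is the Karpovsky--Milman lemma, which shatters a $c$-set $U$ in the strong sense that every pattern in $[q]^U$ is realized as some $v|_U$, but its threshold is roughly $(q-1)^\ell \ell^{c-1}$, exponential in $\ell$. In the induction of Lemma~\ref{lem:cinc}, $\ell$ ranges up to $\Theta(L)$, so that threshold forces $|S|$ exponential in $L$ and propagates straight back into the $q^{O(1/\epsilon)}$ list-size bound one is trying to beat. Your alternative of applying Sauer--Shelah to the set system $\{\supp(v):v\in S\}$ does not rescue this: that family lives in $\{0,1\}^\ell$ and can collapse to a single element (all of $S$ sharing one support), so its size carries no information about $|S|$, and in any case a shattered $U$ for the \emph{supports} is not what the chain construction needs. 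The paper's actual escape is a new, deliberately weaker $q$-ary Sauer--Shelah statement (Lemma~\ref{lem:qsauer}): if $|S| > 2((q-1)\ell)^{c-1}$ --- polynomial in $\ell$ --- then some $c$-set $U$ has the property that every $u\in[q]^U$ is \emph{avoided in every coordinate of $U$} by some $v|_U$ with $v\in S$. This weaker notion of shattering is exactly what the increasing-chain construction needs (after shifting by $w=(-u_0,w')$, the appended vector $v_d$ with $(v_d)|_U$ everywhere different from $u_0$ contributes $c$ fresh support coordinates), and unlike Karpovsky--Milman it admits a polynomial threshold via an induction that separates coordinates realized once versus at least twice. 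Without formulating and proving this non-standard lemma, your $q$-ary argument does not close; the paper flags this as the one genuine deviation from the binary proof.
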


The proof of Theorem~\ref{thm:qmain} has the same basic outline as the
proof of Theorem~\ref{thm:main}.  In particular, it proceeds via a
$q$-ary analog of Theorem~\ref{thm:cbound}.  The only notable
deviation occurs in the proof of the $q$-ary analog of
Lemma~\ref{lem:cinc}.  The traditional generalization of the
Sauer-Shelah lemma to larger alphabets turns out to be unsuitable for
this purpose. Instead, we formulate and prove a non-standard
generalization of the Sauer-Shelah lemma for the larger alphabet case
which is more appropriate for this situation. Details appear in
Section~\ref{sec:qary}.


\section{Proof of Theorem~\ref{thm:main}}
\label{sec:ctogeneral}

Let us start by restating our main theorem.

\noindent {\bf Theorem~\ref{thm:main} (restated)}\ \ 
{\it
Let $p \in (0, 1/2)$. Then there exist constants $C_p, \delta > 0$, such that
for all $\epsilon > 0$ and all large enough integers $n$, letting $R = 1 - H(p) - \epsilon$, if $\mathcal C \subseteq \F_2^n$
is a random linear code of rate $R$, then
$$\Pr[\mathcal C \mbox{ is $(p, \frac{C_p}{\epsilon})$-list-decodable}] > 1 - 2^{-\delta n}.$$
}

\begin{proof}
Pick $C_p = 4 C$, where $C$ is the constant from Theorem~\ref{thm:cbound}. Pick $\delta = 1$. Take $L = \frac{C_p}{\epsilon}$.

Let $\mathcal C$ be a random $Rn$ dimensional linear subspace of $\F_2^n$.
We want to show that
\begin{equation}
\label{eq:bound1}
\Pr_{\mathcal C} [\exists x \in \F_2^n \mbox{ s.t. } |B_n(x, p) \cap \mathcal C | > L] < 2^{-\delta n}.
\end{equation}

Let $x \in \F_2^n$ be picked uniformly at random. We will work towards Equation~\eqref{eq:bound1} by
studying the following quantity.
$$ \Delta \eqdef \Pr_{\mathcal C, x}[ |B_n(x, p) \cap \mathcal C | > L ].$$
Note that to prove Equation~\eqref{eq:bound1}, it suffices to show
that\footnote{We could even replace the $2^{-n}$ by
  $2^{-(1-R)n}$. Indeed, for every $\mathcal C$ for which there is a
  ``bad'' $x$, we know that there are $2^{Rn}$ ``bad'' $x$'s (the
  translates of $x$ by $\mathcal C$).}
$$ \Delta < 2^{-\delta n} \cdot 2^{-n}.$$
\vnote{We can also replace $2^{-n}$ by $2^{(H(p)-1)n}$. This would better than $2^{(R-1)n}$ for low rates or $p \to 1/2$. A commented footnote is in the latex file.}


Now for each $\ell \in [\log(L+1), L+1]$, let 
$\mathcal F_\ell$ be the set of all $(v_1, \ldots, v_{\ell}) \in B_n(0,p)^\ell$ such
that $v_1, \ldots, v_\ell$ are linearly independent and 
$|\mathrm{span}(v_1, \ldots, v_{\ell}) \cap B_n(0,p)^{\ell}| > L$.
Let $\mathcal F = \bigcup_{\ell = \log (L+1)}^{L+1} \mathcal F_\ell$ 

For each $\mathbf v = (v_1, \ldots, v_\ell) \in \mathcal F$, let $\{ \mathbf v \}$
denote the set $\{ v_1, \ldots, v_{\ell}\}$.

We now bound $\Delta$. Notice that if $|B_n(x,p) \cap \mathcal C| >
L$, then there must be some $\mathbf v \in \mathcal F$ for which
$B_n(x,p) \cap \mathcal C \supseteq x + \{ \mathbf v \}$.  Indeed, we
can simply take $\mathbf v$ to be a maximal linearly independent
subset of $(B_n(x,p) \cap \mathcal C) + x$ if this set has size at
most $L+1$, and any linearly independent subset of $(B_n(x,p) \cap
\mathcal C) + x$ of size $L+1$ otherwise.

Therefore, by the union bound,
\begin{align}
\Delta &\leq \sum_{\mathbf v \in \mathcal F} \Pr_{\mathcal C, x} [ B_n(x,p) \cap \mathcal C \supseteq x + \{ \mathbf v \}]\\
&= \sum_{\mathbf v \in \mathcal F} \Pr_{\mathcal C, x} [B_n(0, p) \cap (\mathcal C + x) \supseteq \{ \mathbf v \}]\\
&\leq \sum_{\mathbf v \in \mathcal F} \Pr_{\mathcal C, x} [B_n(0, p) \cap (\mathcal C + \{0, x\} ) \supseteq \{ \mathbf v \}]\\
&= \sum_{\mathbf v \in \mathcal F} \Pr_{\mathcal C^*} [B_n(0,p) \cap \mathcal C^* \supseteq \{ \mathbf v \}],
\label{eq:b2}
\end{align}
where $\mathcal C^*$ is the code $\mathcal C + \{0, x\}$ which is a
random $Rn + 1$ dimensional subspace.

The last probability can be bounded as follows.  By the linear
independence of $v_1, \ldots, v_\ell$, the probability that $v_j \in
\mathcal C^*$ conditioned on $\{v_1, \ldots, v_{j-1}\} \subseteq
\mathcal C^*$ is precisely the probability that a given point in a
$n+1-j$ dimensional space lies in a $Rn+1-j$ dimensional subspace, and
hence this conditional probability is exactly $2^{Rn+1-n}$.  We can
hence conclude that
\begin{equation}
\label{eq:b3}
\Pr_{\mathcal C^*}[ \mathcal C^*\supseteq \{ \mathbf v \} ] = \left(\frac{ 2^{Rn + 1} }{2^n} \right)^\ell.
\end{equation}

Putting together Equations~\eqref{eq:b2} and \eqref{eq:b3}, we have
\begin{align*}
\Delta &\leq \sum_{\mathbf v \in \mathcal F} \Pr_{\mathcal C^*} [B_n(0,p) \cap \mathcal C^* \supseteq \{ \mathbf v \}]
\leq \sum_{\ell = \log (L+1)}^{L+1} \sum_{\mathbf v \in \mathcal F_\ell} \Pr_{\mathcal C^*} [\mathcal C^* \supseteq \{ \mathbf v \}]\\
&\leq \sum_{\ell = \log (L+1)}^{L+1} \sum_{\mathbf v \in \mathcal F_\ell} \left(\frac{ 2^{Rn + 1} }{2^n} \right)^\ell 
\leq \sum_{\ell = \log (L+1)}^{L+1} |\mathcal F_{\ell}| \cdot \left(\frac{ 2^{Rn + 1} }{2^n} \right)^\ell \\
\end{align*}

We now obtain an upper bound on $|\mathcal F_\ell|$.
We have two cases depending on the size of $\ell$.
\begin{itemize}
\item {\bf Case 1:} $\ell < 4/\epsilon$. In this
case, we notice that $\frac{|\mathcal F_{\ell}|}{|B_n(0,p)|^\ell}$ is a lower bound on the probability that $\ell$ points $X_1, \ldots, X_\ell$ chosen uniformly at random from $B_n(0,p)$
have $|\mathrm{span}(\{X_1, \ldots, X_\ell\}) \cap B_n(0,p)| > L.$
Since $L > C \cdot \ell$, Theorem~\ref{thm:cbound} tells us
that this probability is bounded from above by $2^{-5n}$. Thus, in this case $|\mathcal F_{\ell}| \leq |B_n(0,p)|^{\ell} 2^{-5n} \leq 2^{n \ell H(p)} \cdot 2^{-5n}.$
\item {\bf Case 2:} $\ell \geq 4/\epsilon$. In this case, we have the trivial bound of $|\mathcal F_\ell| \leq |B_n(0,p)|^{\ell} \leq 2^{n\ell H(p)}$.
\end{itemize}
\noindent
Thus, we may bound $\Delta$ by:
\begin{align*}
\Delta & \leq \sum_{\ell=\log L}^{\lfloor 4/\epsilon\rfloor} |\mathcal F_{\ell}| \cdot \left(\frac{ 2^{Rn+1} }{2^n} \right)^\ell +  \sum_{\ell=\lceil 4/\epsilon\rceil}^{L} |\mathcal F_{\ell}| \cdot \left(\frac{ 2^{Rn+1} }{2^n} \right)^\ell\\
&\leq \sum_{\ell = \log L}^{\lfloor 4/\epsilon\rfloor} 2^{n\ell H(p)} 2^{-5n} \left(\frac{ 2^{Rn+1} }{2^n} \right)^\ell +  \sum_{\ell=\lceil 4/\epsilon\rceil}^{L} 2^{n\ell H(p)}  \left(\frac{ 2^{Rn+1} }{2^n} \right)^{\ell}\\
& \leq 2^{-5n} \cdot 4/ \epsilon + L \cdot 2^{-(\epsilon n)\cdot (4/\epsilon) }\\
& \leq 2^{-\delta n} \cdot 2^{-n}
\end{align*}
as desired.
\end{proof}

\section{Proof of Theorem~\ref{thm:cbound}}
\label{sec:cbound}
In this section, we prove Theorem~\ref{thm:cbound} which bounds the probability that the span of $\ell$ random points
in $B_n(0,p)$ intersects $B_n(0,p)$ in more than $C \cdot \ell$ points, for some large constant $C$.
We use the following simple fact.
\begin{lemma}
\label{lem:deltap}
For every $p \in (0, 1/2)$, there is a $\delta_p > 0$ such that for
all large enough integers $n$ and every $x \in \F_2^n$, the
probability that two uniform independent samples $w_1, w_2$ from
$B_n(0,p)$ are such that $w_1 + w_2 \in B_n(x,p)$ is at most
$2^{-\delta_p n}$.
\end{lemma}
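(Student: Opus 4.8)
Lemma~\ref{lem:deltap} asks us to show that for $p \in (0,1/2)$, independent uniform $w_1, w_2 \in B_n(0,p)$ land with $w_1 + w_2 \in B_n(x,p)$ only with exponentially small probability, uniformly over $x$.

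The plan is to bound this probability by the largest single-point probability and then count. The key observation is that for any fixed $z \in \F_2^n$,
$$\Pr_{w_1, w_2}[w_1 + w_2 = z] = \sum_{w_1 \in B_n(0,p)} \Pr[w_1] \cdot \Pr[w_2 = w_1 + z] \le \frac{1}{|B_n(0,p)|},$$
since $w_2$ is uniform on a set of size $|B_n(0,p)| = \sum_{i \le pn} \binom{n}{i}$, so each value of $w_2$ has probability at most $1/|B_n(0,p)|$, and the $\Pr[w_1]$ sum to at most $1$. Summing over the $z \in B_n(x,p)$ then gives
$$\Pr_{w_1,w_2}[w_1 + w_2 \in B_n(x,p)] \le \frac{|B_n(x,p)|}{|B_n(0,p)|} = 1,$$
which is useless as stated — so the naive union bound is too lossy, and the real work is to exploit that $w_1 + w_2 \in B_n(x,p)$ is a genuine constraint on the \emph{pair}.

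Instead I would argue directly about Hamming weights. Write $a = \wt(w_1) \le pn$, $b = \wt(w_2) \le pn$. Conditioned on $\wt(w_1) = a$ and $\wt(w_2) = b$, the vector $w_1$ is uniform among weight-$a$ vectors and independently $w_2$ among weight-$b$ vectors. The weight $\wt(w_1 + w_2)$ concentrates: if we think of choosing the supports, $\wt(w_1+w_2) = a + b - 2|\supp(w_1) \cap \supp(w_2)|$, and $|\supp(w_1)\cap\supp(w_2)|$ is hypergeometric-like with expectation $\approx ab/n$. Since $a,b \le pn$ with $p < 1/2$, the typical value of $\wt(w_1+w_2)$ is $\approx a + b - 2ab/n$, which — using $p < 1/2$ — is bounded below by a quantity like $2p(1-p)n$ times an appropriate constant, at any rate by $(p+\gamma_p)n$ for some constant $\gamma_p > 0$ whenever $a, b$ are not both far below $pn$; and a standard tail bound (Chernoff/Hoeffding for the hypergeometric overlap, or a direct entropy computation on $\binom{n}{a}\binom{n}{b}$ versus the count of pairs with small symmetric difference) shows the probability that $\wt(w_1+w_2) \le pn$ is at most $2^{-\delta_p n}$. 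One then also has to handle the regime where $a$ or $b$ is small (say $\le \beta n$ for tiny $\beta$): there the mass of $B_n(0,p)$ concentrated on such low weights is itself $2^{-\Omega(n)}$ since $p<1/2$ means $\wt$ of a uniform element of $B_n(0,p)$ is itself $\Theta(n)$ w.h.p.; alternatively absorb this into the same entropy bound. Summing the per-$(a,b)$ bounds over the at most $(pn+1)^2 = \poly(n)$ choices of $(a,b)$ keeps the bound $2^{-\delta_p n}$ after adjusting $\delta_p$ slightly.

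The main obstacle is getting a clean \emph{uniform-in-$x$} exponential bound on $\Pr[\wt(w_1+w_2) \le pn]$. The cleanest route is probably the counting/entropy version: the number of pairs $(w_1, w_2)$ with $\wt(w_1) \le pn$, $\wt(w_2) \le pn$, and $\wt(w_1 \oplus w_2) \le pn$ — equivalently a "triangle" of three $\le pn$-weight constraints on $w_1, w_2, w_1\oplus w_2$ — can be shown to be at most $2^{(2H(p) - \delta_p)n}$ by a direct parametrization (choose the four cells of the Venn diagram of the two supports with sizes $s_{11}, s_{10}, s_{01}, s_{00}$ summing to $n$, subject to $s_{10}+s_{11} \le pn$, $s_{01}+s_{11}\le pn$, $s_{10}+s_{01}\le pn$, and bound the multinomial $\binom{n}{s_{11},s_{10},s_{01},s_{00}}$), whereas $|B_n(0,p)|^2 = 2^{(2H(p)\pm o(1))n}$; the gap between the optimum of the constrained multinomial exponent and $2H(p)$ is a positive constant depending only on $p < 1/2$, and that difference is $\delta_p$. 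I would carry out the steps in this order: (1) reduce to the counting statement, independent of $x$ up to the trivial $|B_n(x,p)|=|B_n(0,p)|$ factor being irrelevant because we count pairs directly; (2) set up the four-cell parametrization; (3) show the constrained entropy maximum is $< 2H(p)$ by a short Lagrange/convexity argument using $p<1/2$; (4) conclude with a polynomial factor loss from the number of cell-size tuples.
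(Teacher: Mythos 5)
The core computation you set up is sound and matches the paper's idea: the paper's own proof is only a two-sentence sketch, namely that $w_1+w_2$ is essentially a uniform point of $B_n(0,2p-2p^2)$, and $2p(1-p)>p$ exactly when $p<1/2$, so the probability it lands in the smaller ball is exponentially small. Your four-cell Venn-diagram/entropy computation is a correct and rigorous way to carry this out for $x=0$: the constrained maximum of $H(\alpha,\beta,\gamma,\delta)$ subject to $\alpha+\beta\le p$, $\alpha+\gamma\le p$, $\beta+\gamma\le p$ is strictly below $2H(p)$ because the unconstrained maximizer $(p^2,p(1-p),p(1-p),(1-p)^2)$ violates the third constraint when $p<1/2$, and strict concavity of entropy then gives a definite gap $\delta_p>0$.

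However, you correctly flag that the statement must hold \emph{for every} $x\in\F_2^n$, and then you do not actually close that gap: your counting argument bounds the number of pairs with $\wt(w_1)\le pn$, $\wt(w_2)\le pn$, $\wt(w_1\oplus w_2)\le pn$, which is exactly the $x=0$ case, and the remark that ``$|B_n(x,p)|=|B_n(0,p)|$'' does not resolve it, since the set of pairs satisfying $w_1\oplus w_2\in B_n(x,p)$ genuinely depends on $x$. The paper dispatches this by asserting (in the sketch) that $\Pr[w_1+w_2\in B_n(x,p)]$ is maximized at $x=0$. To justify that cleanly: the distribution of $W=w_1+w_2$ is permutation-symmetric, so $\Pr[W=z]$ is a function $g(\wt(z))$; moreover $g$ is nonincreasing, since $\Pr[W=z]=|B_n(0,p)\cap B_n(z,p)|/|B_n(0,p)|^2$ and the overlap of two radius-$pn$ balls is a nonincreasing function of the distance between their centers. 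Then $\sum_{z\in B_n(x,p)}g(\wt(z))$ over a set of fixed size $|B_n(0,p)|$ is maximized by the set of lowest-weight points, which is exactly $B_n(0,p)$, i.e.\ $x=0$. Adding this reduction makes your proof complete; without it, the ``uniform in $x$'' part of the lemma is not established.
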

\noindent
{\bf Sketch of proof.} The point $w_1+w_2$ is essentially
a random point in $B_n(0,2p-2p^2)$.  The probability that it lies
in the smaller ball $B_n(x,p)$ is easily seen to be maximal
when $x=0$ and is then exponentially small.\qed

\noindent
{\bf Theorem~\ref{thm:cbound} (restated)}\ \ 
{\it
For every $p \in ( 0, 1/2)$, there is a constant $C > 0$, such that for all $n$ large enough and all
$\ell = o(\sqrt{n})$, if
$X_1, \ldots, X_\ell$ are picked independently and 
uniformly at random from $B_n(0, p)$,
then
$$ \Pr [ | \mathrm{span}(\{X_1, \ldots, X_\ell\}) \cap B_n(0, p) | > C\cdot\ell] \leq 2^{-5n}.$$
}
\begin{proof}
  Set $L = C \cdot \ell$ and let $c = 2$. Let $\delta_p > 0$ be the
  constant given by Lemma~\ref{lem:deltap}. Let
\[ d = \bigg\lfloor \frac{1}{c}\log \frac{L}{2} - \Bigl(1 - \frac{1}{c}\Bigr)\log\ell \bigg\rfloor \ge \frac{1}{2} \log \frac{L}{2\ell} - 1 = \frac{1}{2}
\log \frac{C}{8} \ . \] 
\noindent For a vector $u \in \F_2^{\ell}$, let $X_u$
denote the random variable $\sum_{i} u_i X_i$.

We begin with a claim which bounds the probability of a particular
collection of linear combinations of the $X_i$ all lying within
$B_n(0,p)$. At the heart of this claim lies the Ramsey-theoretic
Lemma~\ref{lem:cinc}.

\begin{claim}
\label{claim:perS}
For each $S \subseteq \F_2^\ell$ with $|S| =L+1$, 
\begin{equation}
\label{eq:perS}
\Pr[ \forall v \in S, X_v \in B_n(0,p)] < 2^n \cdot 2^{-\delta_p dn}.
\end{equation}
\end{claim}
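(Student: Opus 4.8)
The plan is to combine the Ramsey-theoretic Lemma~\ref{lem:cinc} with the pairwise-sum estimate of Lemma~\ref{lem:deltap}. First I would apply Lemma~\ref{lem:cinc} to the set $S$ with $|S| = L+1$ and $c=2$: this produces a vector $w \in \F_2^\ell$ such that $S + w$ contains a $2$-increasing chain $v_1, \ldots, v_d$ of length $d = \lfloor \frac12 \log\frac{L+1}{2} - \frac12\log\ell\rfloor$ (matching the $d$ fixed at the start of the proof, up to the harmless $L$ versus $L+1$). The point of passing to the translate $S+w$ is that the event ``$X_v \in B_n(0,p)$ for all $v \in S$'' is the same as ``$X_{v} \in B_n(X_w, p)$ for all $v \in S + w$'' — because $X_{v+w} = X_v + X_w$, translating the index set by $w$ translates the ball's center by $X_w$. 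So it suffices to bound $\Pr[\forall v \in S+w,\ X_v \in B_n(X_w,p)]$, and in particular $\Pr[\forall j \in [d],\ X_{v_j} \in B_n(X_w,p)]$, since the chain is a subset of $S+w$.

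Next I would exploit the increasing-chain structure to decouple the $d$ events. For each $j$, let $T_j = \supp(v_j) \setminus \bigcup_{i<j}\supp(v_i)$; by definition $|T_j| \ge c = 2$, and the $T_j$ are pairwise disjoint. Pick two distinct coordinates $a_j, b_j \in T_j$ for each $j$. Now condition on all the variables $X_i$ for $i \notin \{a_j, b_j : j \in [d]\} \cup \supp(w)$ — actually it is cleanest to reveal the $X_i$'s in a suitable order. The key observation is that, conditioned on everything except $X_{a_j}$ and $X_{b_j}$, the vector $X_{v_j}$ equals $X_{a_j} + X_{b_j} + (\text{a fixed vector})$, where the fixed vector depends only on already-revealed coordinates; and the center $X_w$ is also determined by already-revealed coordinates (one should order the revelation so that $\supp(w)$-indices and the ``earlier'' chain coordinates are exposed first — since $a_j,b_j \notin \supp(v_i)$ for $i<j$ but they could lie in later $\supp(v_i)$; disjointness of the $T_j$ is exactly what lets us choose the $a_j,b_j$ all distinct across $j$, so there is no conflict). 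Then Lemma~\ref{lem:deltap}, with $x$ the realized value of $X_w + (\text{fixed vector})$, gives that $\Pr[X_{v_j}\in B_n(X_w,p) \mid \text{earlier stuff}] \le 2^{-\delta_p n}$. Multiplying over $j=1,\ldots,d$ yields $\Pr[\forall v\in S, X_v \in B_n(0,p)] \le 2^{-\delta_p d n}$; the extra factor $2^n$ in the claim is slack (it comes either from a union bound over the possible value of $X_w$ or is simply absorbed — I would just carry it along harmlessly).

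I expect the main obstacle to be setting up the conditioning/revelation order rigorously so that at step $j$ the quantities $X_w$ and $\sum_{i \in \supp(v_j)\setminus\{a_j,b_j\}} X_i$ are already fixed while $X_{a_j}, X_{b_j}$ are still uniform and independent of each other and of everything revealed. The disjointness of the sets $T_j$ guarantees a consistent global choice of the distinct pairs $\{a_j,b_j\}$, but one must check that the coordinates $a_j, b_j$ do not appear in $\supp(v_{i})$ for $i > j$ in a way that breaks independence — handled by revealing the pairs in the order $j=1,2,\ldots,d$ and noting that $a_j, b_j \in T_j$ are fresh (not in any earlier support), so they have not been touched, and when we get to step $j$ we only need them to be unrevealed, which holds. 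A secondary point is matching the floor expression for $d$: Lemma~\ref{lem:cinc} gives a chain of length $\ge \frac1c\log\frac{L+1}{2} - (1-\frac1c)\log\ell$, and since $L+1 > L = C\ell$ this is $\ge d$ as defined, so the chain is long enough. With all $d$ factors of $2^{-\delta_p n}$ in hand, Equation~\eqref{eq:perS} follows immediately.
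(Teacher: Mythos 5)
Your strategy is right at the top level: apply Lemma~\ref{lem:cinc} with $c=2$ to find $w$ and a $2$-increasing chain $v_1,\dots,v_d$ in $S+w$, observe that the original event becomes ``$X_{v_j}\in B_n(X_w,p)$ for all $j$'', and then kill each chain step with Lemma~\ref{lem:deltap} using two fresh coordinates $a_j,b_j \in T_j$. This matches the skeleton of the paper's proof. But the conditioning scheme you propose has a genuine gap, and it sits precisely at the point you already flagged as ``the main obstacle.''

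The problem is that $a_j, b_j$ may lie in $\supp(w)$. You pick $a_j,b_j\in T_j = \supp(v_j)\setminus\bigcup_{i<j}\supp(v_i)$; disjointness of the $T_j$ does ensure the pairs are distinct across $j$ and not touched by earlier chain supports, but it says nothing about $\supp(w)$. If $a_j\in\supp(w)$, then your two goals are incompatible: you want $X_w$ to be determined before step $j$ (so it can serve as the center in Lemma~\ref{lem:deltap}), yet you want $X_{a_j}$ to be unrevealed and uniform at step $j$. You cannot reveal ``$\supp(w)$-indices first'' and simultaneously keep $a_j$ fresh. There is also no a priori way to pick $a_j,b_j$ avoiding $\supp(w)$, since $|T_j\setminus\supp(w)|$ can be $0$ or $1$ even though $|T_j|\ge 2$. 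So the clean bound $2^{-\delta_p d n}$ you get from multiplying the $d$ conditional probabilities does not actually follow from the revelation order you describe.

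The paper sidesteps this by never conditioning on $X_w$ at all. It fixes an \emph{arbitrary} center $y\in\F_2^n$, bounds $\Pr[\forall j,\ X_{v_j+w}\in B_n(y,p)]\le (2^{-\delta_p n})^d$ by peeling off the coordinates of $\supp(v_j+w)\setminus\bigcup_{i<j}\supp(v_i+w)$ one chain step at a time (now the center is deterministic, so there is no overlap issue), and then takes a union bound over all $y$. This is exactly why the $2^n$ factor appears in the claim: it is not slack but the price of decoupling the random center $X_w$ from the chain coordinates. Your aside (``it comes either from a union bound over the possible value of $X_w$ or is simply absorbed'') correctly names the fix, but your main argument does not implement it, and the version you do write out is broken by the $\supp(w)$ overlap. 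If you replace the conditioning-on-$X_w$ step by the union bound over $y$, the rest of your outline becomes the paper's proof.
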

\begin{proof}
  Let $w$ and $v_1, \ldots, v_d \in S$ be as given by
  Lemma~\ref{lem:cinc}. That is, $v_1+w, v_2+w, \cdots, v_d+w$ is an
  $c$-increasing sequence. 
Then,
\begin{align}
\Pr [ \forall v \in S, X_v \in B_n(0,p)] &\leq  \Pr [ \forall j \in [d], X_{v_j} \in B_n(0,p)] \\
&= \Pr [ \forall j \in [d], X_{v_j} +  X_w \in B_n(X_w,p)]\\
&= \Pr [ \forall j \in [d], X_{v_j + w} \in B_n(X_w,p)]
\label{eq:bnxw}
\end{align}
We now bound the probability that there exists $y \in \F_2^n$
such that for all $j \in [d]$, $X_{v_j + w} \in B_n(y, p)$.
Fix $y \in \F_2^n$. We have:
\begin{align}
\Pr [ \forall j \in [d], X_{v_j +w} \in B_n(y,p)] &\leq \prod_{j = 1}^d \Pr\biggl[ X_{v_j + w} \in B_n(y,p) \mid ( X_t : t \in \Bigl(\bigcup_{i = 1}^{j-1}\supp(v_i + w)\Bigr)\biggr]\\
&\leq \bigl( 2^{-\delta_p  n}\bigr)^d.
\label{eq:bnxw1}
\end{align}
The last inequality follows from applying Lemma~\ref{lem:deltap} with
$w_1$ and $w_2$ being vectors $X_{i_1}$ and $X_{i_2}$, where $i_1, i_2$ are
two distinct elements of $\supp(v_j + w) \setminus \left(\bigcup_{i = 1}^{j-1}\supp(v_i +
  w)\right)$, and $x = y + \sum_{k \in [\ell], k\not\in \{i_1, i_2\}}
(v_j+w)_k X_k$.
Taking a union bound of Equation~\eqref{eq:bnxw1} over all $y \in \F_2^n$, we see that
$$ \Pr[ \exists y \in F_2^n \mbox{ s.t. } \forall j \in [d], X_{v_j + w} \in B_n(y,p)] \leq 2^n \cdot 2^{-\delta_p n d}.$$
Combining this with Equation~\eqref{eq:bnxw} completes the proof of the claim.
\end{proof}

Given this claim, we now bound the probability that more than $L$
elements of ${\mathrm {span}}(\{X_1, \ldots, X_\ell\})$ lie inside
$B_n(0,p)$.  This event occurs if and only if for some set $S
\subseteq \F_2^\ell$ with $|S| = L+1$, it is the case that $\forall v
\in S$, $X_v \in B_n(0,p)$.  Taking a union bound of~\eqref{eq:perS}
over all such $S$, we see that the probability that there exists some
$S \subseteq \F_2^\ell$ with $|S| = L+1$ such that $\forall v \in S,
X_v \in B_n(0,p)$ is at most $2^{\ell (L+1)} \cdot 2^{n} \cdot
2^{-\delta_p d n}$.  Taking $C$ to be a large enough constant so that
$d \ge \frac{1}{2} \log\frac{C}{8} > \frac{12}{\delta_p}$, the theorem
follows.
\end{proof}

\section{Proof of Lemma~\ref{lem:cinc}}
\label{sec:cinc}

In this section, we will prove Lemma~\ref{lem:cinc}, which finds a
large $c$-increasing chain in some translate of any large enough set
$S \subseteq \F_2^\ell$.

We will use the Sauer-Shelah Lemma.

\begin{lemma}[Sauer-Shelah~\cite{sauer, shelah}]
For all integers $\ell, c$, and for any set $S \subseteq \{0,1\}^\ell$, if $|S| > 2 \ell^{c-1}$, then
there exists some set of coordinates $U \subseteq [\ell]$ with $|U| = c$ such that 
$\{ v|_U \mid v \in S \} = \{0, 1\}^U$.
\end{lemma}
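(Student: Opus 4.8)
The plan is to derive the lemma from the sharp Sauer--Shelah bound, which I would prove by induction on the number of coordinates $\ell$. Say that $S\subseteq\{0,1\}^{\ell}$ \emph{shatters} a set $T\subseteq[\ell]$ if $\{v|_{T}:v\in S\}=\{0,1\}^{T}$, i.e.\ every pattern on the coordinates of $T$ is attained. The statement I would prove inductively is: \emph{if $S\subseteq\{0,1\}^{\ell}$ shatters no set of size $c$, then $|S|\le\sum_{i=0}^{c-1}\binom{\ell}{i}$.} Granting this, the lemma follows in two short steps. First, $\sum_{i=0}^{c-1}\binom{\ell}{i}\le\sum_{i=0}^{c-1}\ell^{i}\le 2\ell^{c-1}$ for all $\ell\ge 1$ (a routine estimate, with the degenerate cases $\ell=1$ and $c=1$ checked by hand), so the hypothesis $|S|>2\ell^{c-1}$ contradicts the conclusion of the inductive statement and hence forces $S$ to shatter some set $T$ of size $\ge c$. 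Second, an $S$ shattering such a $T$ also shatters every $c$-element subset $U\subseteq T$, since restricting the $2^{|T|}$ attained patterns to the coordinates of $U$ still attains all $2^{c}$ patterns. So $S$ shatters a coordinate set $U$ of size exactly $c$, which is precisely what the lemma asserts.

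For the induction, the base case $\ell=0$ is trivial ($|S|\le 1=\binom{0}{0}\le\sum_{i=0}^{c-1}\binom{0}{i}$). For the inductive step I would peel off the last coordinate: writing each $v\in S$ as $(v',b)$ with $v'\in\{0,1\}^{\ell-1}$ and $b\in\{0,1\}$, define
\begin{align*}
S_{\mathrm{proj}} &=\{\,v'\in\{0,1\}^{\ell-1}\ :\ (v',0)\in S \ \text{ or }\ (v',1)\in S\,\},\\
S_{01} &=\{\,v'\in\{0,1\}^{\ell-1}\ :\ (v',0)\in S \ \text{ and }\ (v',1)\in S\,\}.
\end{align*}
Counting $S$ according to the value of its first $\ell-1$ coordinates gives $|S|=|S_{\mathrm{proj}}|+|S_{01}|$. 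The two facts driving the step are: (i) if $S_{\mathrm{proj}}$ shatters $T\subseteq[\ell-1]$ then $S$ shatters $T$ (each $v'\in S_{\mathrm{proj}}$ agrees with some $w\in S$ on $[\ell-1]$, hence on $T$); and (ii) if $S_{01}$ shatters $T\subseteq[\ell-1]$ then $S$ shatters $T\cup\{\ell\}$ (given a pattern $\sigma$ on $T$, pick $v'\in S_{01}$ with $v'|_{T}=\sigma$; both $(v',0)$ and $(v',1)$ lie in $S$, realizing the two extensions of $\sigma$). Assuming $S$ shatters no $c$-set, (i) shows $S_{\mathrm{proj}}$ shatters no $c$-set, so by induction $|S_{\mathrm{proj}}|\le\sum_{i=0}^{c-1}\binom{\ell-1}{i}$, and (ii) shows $S_{01}$ shatters no $(c-1)$-set, so by induction $|S_{01}|\le\sum_{i=0}^{c-2}\binom{\ell-1}{i}$. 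Adding these and using Pascal's identity $\binom{\ell-1}{i}+\binom{\ell-1}{i-1}=\binom{\ell}{i}$ gives $|S|\le\sum_{i=0}^{c-1}\binom{\ell}{i}$, closing the induction.

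I do not expect a genuine obstacle; the only delicate part is the bookkeeping in the inductive step — seeing that the projection $S_{\mathrm{proj}}$ and the ``doubly covered'' family $S_{01}$ are the right pair, that their sizes add up to $|S|$, and that the forbidden shattering size drops to $c-1$ for $S_{01}$ but stays $c$ for $S_{\mathrm{proj}}$, so that Pascal's identity exactly reassembles the target bound — together with a sanity check of the edge cases $c=1$, $\ell\le 1$, and $S_{01}=\emptyset$. If a different route is preferred, one can instead run the compression argument: repeatedly, for each coordinate in turn, replace $1$'s by $0$'s in members of $S$ whenever the result is not already present; this leaves $|S|$ unchanged and can only shrink the collection of shattered sets, and upon stabilization produces a downward-closed family $S'$ with $|S'|=|S|$; since a downward-closed family shatters every set it contains, and there are only $\sum_{i<c}\binom{\ell}{i}<|S'|$ sets of size $<c$, the family $S'$ contains (hence shatters) a set of size $\ge c$, and therefore so does $S$. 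I would present the induction, as the shorter self-contained option.
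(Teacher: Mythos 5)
The paper offers no proof of this lemma --- it is quoted as a known result with citations to Sauer and Shelah --- so there is nothing internal to compare against; your write-up is a correct, self-contained proof by the classical double-counting induction ($|S| = |S_{\mathrm{proj}}| + |S_{01}|$, with the shattering threshold dropping to $c-1$ for $S_{01}$, reassembled via Pascal's identity), which is essentially the original argument of Sauer and Shelah. The only spot worth a second look is the numerical step $\sum_{i=0}^{c-1}\binom{\ell}{i}\le\sum_{i=0}^{c-1}\ell^{i}\le 2\ell^{c-1}$: the second inequality is false as stated for $\ell=1$ (the left side is $c$), but the binomial sum itself is at most $2$ there, so the conclusion survives --- you flagged this degenerate case, and handling it by bounding the binomial sum directly for $\ell=1$ closes it. Everything else, including the reduction from ``shatters some set of size $\ge c$'' to ``shatters a set of size exactly $c$'' and the edge cases $c=1$ and $S_{01}=\emptyset$, is sound.
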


\noindent
{\bf Lemma~\ref{lem:cinc} (restated)}\ \ 
{\it
For all positive integers $c,\ell$ and $L \le 2^\ell$, the following holds.
For every $S \subseteq
\F_2^\ell$ with $|S| = L$, there is a $w \in \F_2^\ell$ such that $S+ w$
has an $c$-increasing chain of length at least $\frac{1}{c}(\log
\frac{L}{2}) - (1 - \frac{1}{c})(\log \ell) $.
}

\begin{proof}
We prove this by induction on $\ell$. The claim holds trivially for $\ell \le c$, so assume $\ell > c$. 

If $L \leq 2 \ell^{c-1}$, then again the lemma holds trivially.
Otherwise, by the Sauer-Shelah lemma, we get a set $U$ of $c$
coordinates such that for each $u \in \F_2^U$, there is some $v \in S$
such that $v|_U = u$.  We will represent elements of $\F_2^{\ell}$ in
the form $(u, v')$ where $u \in \F_2^U$ and $v' \in
\F_2^{[\ell]\setminus U}$.

Let $u_0 \in \F_2^U$ be a vector such that $|\{ v \in S \mid v|_{U} =
u_0\}|$ is at least $L / 2^c$ (we know that such a $u$ exists by
averaging).  Let $S' \subseteq \F_2^{[\ell]\setminus U}$ be given by
$S' = \{ v|_{[\ell]\setminus U} \mid v|_{U} = u\}$.  By choice of $u$,
we have $|S'| \geq L/2^c$.


By the induction hypothesis, there exist $w' \in \F_2^{\ell-c}$ and $v'_1,
\ldots, v'_{d'} \in S'$ such that for each $j \in [d']$,
$$\Bigl|\supp(v'_j + w') \setminus \Bigl(\bigcup_{i = 1}^{j-1}\supp(v'_i + w')\Bigr)\Bigr| \geq c.$$
for $d'  \ge \frac{1}{c} \log (L/2^{c+1}) - ( 1- \frac{1}{c}) \log (\ell - c)$.

Let $d=d'+1$. Note that $d
\ge \frac{1}{c} \log (L/2) - ( 1- \frac{1}{c}) \log (\ell - c) \ge
\frac{1}{c} \log (L/2) - ( 1- \frac{1}{c}) \log \ell$.  For $i \in
[d']$, let $v_i = (u_0, v'_i) \in \F_2^{\ell}$. Let $v_d$ be any
vector in $S$ with $(v_d)|_U = \neg u_0$, the bitwise complement of
$u_0$. Let $w = (u_0, w')$.  We claim that $w$ and $v_1, \ldots, v_d$
satisfy the desired properties.

\noindent
Indeed, for each $j \in [d']$, we have
\begin{align*}
\biggl|\supp(v_j + w) \setminus \Bigl(\bigcup_{i = 1}^{j-1}\supp(v_i + w)\Bigr)\biggr| &=
\biggl|\supp(v'_j + w') \setminus \Bigl(\bigcup_{i = 1}^{j-1}\supp(v'_i + w')\Bigr)\biggr| \geq c.
\end{align*}
\noindent
Also
$$\left|\supp(v_d + w) \setminus \left(\bigcup_{i = 1}^{d-1}\supp(v_i + w)\right)\right| \geq \left| \supp(v_d + w) \setminus ([\ell] \setminus U )\right| = |U| = c.$$
\noindent
Thus for all $j \in [d]$, we have $\Bigl|\supp(v_j + w) \setminus \bigl(\bigcup_{i = 1}^{j-1}\supp(v_i + w)\bigr)\Bigr| \geq c$,
as desired.
\end{proof}

\section{Larger alphabets}
\label{sec:qary}

As mentioned in the introduction the case of $q$-ary alphabet is nearly identical to the
case of binary alphabet. We only highlight the differences.
As before, the crux turns out to be the problem of studying the weight distribution of certain random constant-dimensional codes.

\begin{theorem}[$q$-ary span of random points in $B^q_n(0,p)$]
\label{thm:qcbound}
For every prime-power $q$ and every $p \in ( 0, 1-1/q)$, there is a constant $C_q > 0$, such that for all $n$ large enough and all
$\ell = o(\sqrt{n})$, if
$X_1, \ldots, X_\ell$ are picked independently and 
uniformly at random from $B^q_n(0, p)$,
then
$$ \Pr [ | \mathrm{span}(\{X_1, \ldots, X_\ell\}) \cap B^q_n(0, p) | > C_q \cdot\ell] \leq q^{-5n}.$$
\end{theorem}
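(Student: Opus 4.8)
The plan is to mirror the proof of Theorem~\ref{thm:cbound} essentially verbatim, replacing every binary ingredient by its $q$-ary analog. First I would establish the $q$-ary version of Lemma~\ref{lem:deltap}: for every prime power $q$ and every $p \in (0,1-1/q)$, there is a $\delta_{p,q} > 0$ such that for all large $n$, all $x \in \F_q^n$, and uniform independent $w_1, w_2$ from $B^q_n(0,p)$ with a uniformly random nonzero scalar $\alpha \in \F_q^*$, the probability that $w_1 + \alpha w_2 \in B^q_n(x,p)$ is at most $q^{-\delta_{p,q}n}$. The intuition is the same: conditioned on $\supp(w_1)$ and $\supp(w_2)$, the random sum $w_1 + \alpha w_2$ has expected weight bounded away from $pn$ (each coordinate where $w_2$ is nonzero is ``scrambled'' so as to fall outside a $1/q$-fraction pattern), so by a Chernoff/concentration bound the chance of landing in a ball of radius $pn$ around any fixed center is exponentially small; one has to be slightly careful to check that the worst case over $x$ and over the supports is still exponentially small, which follows because $2p(1-1/q) > p$ for $p < 1-1/q$ — the natural weight of $w_1 + \alpha w_2$ exceeds $pn$.

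Next I would prove the $q$-ary analog of Lemma~\ref{lem:cinc}: for $S \subseteq \F_q^\ell$ of size $L$, some translate $S + w$ (by $w \in \F_q^\ell$) contains a $c$-increasing chain of length $\Omega_q\!\left(\frac{1}{c}\log\frac{L}{?} - \log \ell\right)$, where the support of $v \in \F_q^\ell$ is its set of nonzero coordinates. As the excerpt flags, the standard $q$-ary Sauer-Shelah (shattering in the sense of all $q^c$ patterns appearing) is the wrong tool; what is needed instead is a ``non-standard'' shattering statement asserting that a large $S$ has a set $U$ of $c$ coordinates on which the projections of $S$ include, for each coordinate in $U$, both a zero value and a nonzero value in a coordinated way — precisely, one wants on $U$ the $2^{|U|}$ zero/nonzero patterns to be realized (a ``zero-pattern shattering''). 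With such a lemma in hand, the induction goes through as in Section~\ref{sec:cinc}: pick the heaviest zero/nonzero pattern class on $U$, recurse on the complementary coordinates to get a chain of length $d'$, and extend by one vector whose restriction to $U$ avoids all the nonzero coordinates used so far, gaining $c$ fresh support coordinates.

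With these two lemmas, the proof of Theorem~\ref{thm:qcbound} is the same union-bound argument. I would index $\mathrm{span}(\{X_1,\dots,X_\ell\})$ by $u \in \F_q^\ell$, fix $S \subseteq \F_q^\ell$ with $|S| = L+1$ (where $L = C_q \ell$), and use the $q$-ary Lemma~\ref{lem:cinc} to find $w$ and a $c$-increasing chain $v_1 + w, \dots, v_d + w$ inside $S + w$ with $d = \Omega_q(\log C_q)$. Then $\Pr[\forall v \in S,\ X_v \in B^q_n(0,p)] \le \Pr[\forall j,\ X_{v_j + w} \in B^q_n(X_w, p)]$, and conditioning coordinate-block by coordinate-block along the chain: for each $j$, having fixed $X_t$ for all $t \in \bigcup_{i<j}\supp(v_i+w)$, there remain at least two indices $i_1, i_2 \in \supp(v_j+w)\setminus\bigcup_{i<j}\supp(v_i+w)$, and because $(v_j+w)_{i_1}, (v_j+w)_{i_2}$ are nonzero, the event $X_{v_j+w} \in B^q_n(y,p)$ is an event of the form $X_{i_1} + \alpha X_{i_2} \in B^q_n(x', p)$ with $\alpha = (v_j+w)_{i_1}^{-1}(v_j+w)_{i_2}$ fixed and nonzero and $x'$ a function of the already-conditioned variables — so the $q$-ary Lemma~\ref{lem:deltap} bounds this conditional probability by $q^{-\delta_{p,q}n}$. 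Multiplying over $j \in [d]$, union-bounding over $q^n$ choices of the center $y$ and over the $\le q^{\ell(L+1)}$ choices of $S$, and taking $C_q$ large enough that $\delta_{p,q}\, d > 5 + (\text{slack for the } q^{\ell(L+1)} \text{ and } q^n \text{ factors})$ — valid since $\ell = o(\sqrt n)$ makes $\ell(L+1) = \ell \cdot O_q(\ell) = o(n)$ — yields the bound $q^{-5n}$. The main obstacle is the non-standard $q$-ary Sauer-Shelah lemma for zero/nonzero pattern shattering and getting its quantitative bound clean enough that the induction in Lemma~\ref{lem:cinc} still produces a chain of length $\Omega_q(\log C_q)$ rather than something degrading with $q$; everything else is a routine transcription of the binary argument with $2 \rightsquigarrow q$.
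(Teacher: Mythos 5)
Your overall architecture matches the paper exactly: establish a $q$-ary analog of Lemma~\ref{lem:deltap}, establish a $q$-ary increasing-chain lemma via a non-standard Sauer-Shelah statement, then run the same union bound over sets $S \subseteq \F_q^\ell$ of size $L+1$ and over centers $y \in \F_q^n$. Your formulation of the $q$-ary pair-correlation lemma (with a nonzero scalar $\alpha$) is equivalent to the paper's Lemma~\ref{lem:qdeltap}, since multiplying by a fixed $\alpha \in \F_q^*$ is a bijection of $B_n^q(0,p)$. And you correctly identify that the Karpovsky--Milman $q$-ary Sauer-Shelah is the wrong tool.

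However, the specific non-standard Sauer-Shelah statement you propose --- that a large $S \subseteq [q]^\ell$ has a $c$-set $U$ of coordinates on which all $2^{|U|}$ zero/nonzero patterns are realized --- is false with any polynomial-in-$\ell$ threshold once $q \geq 3$. Take $S = (\F_q^*)^\ell$, the set of vectors with full support. Then $|S| = (q-1)^\ell$ is exponentially large, yet for every nonempty $U$ and every $v \in S$ the restriction $v|_U$ has the all-nonzero pattern, so the zero/nonzero patterns are nowhere near shattered on any $U$. More generally, any single zero/nonzero-pattern class already has up to $(q-1)^\ell$ elements, so the threshold for your statement is exponential, which would wreck the bound $|\mathcal F_\ell|$-style computation downstream. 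What the paper actually proves (Lemma~\ref{lem:qsauer}) is the much weaker \emph{anti-shattering} statement: for $|S| > 2((q-1)\ell)^{c-1}$, there is a $c$-set $U$ such that for every $u \in [q]^U$ there is some $v \in S$ with $v|_U$ differing from $u$ in \emph{every} coordinate of $U$. This is implied by your zero/nonzero shattering (given $u$, realize the pattern that is $*$ where $u_i = 0$ and $0$ where $u_i \neq 0$), but unlike yours it does hold with the polynomial threshold $2((q-1)\ell)^{c-1}$, provable by an induction that splits on whether many fibers over $[q]^{\ell-1}$ have size $1$ or size $\geq 2$. Your counterexample $(\F_q^*)^\ell$ satisfies this anti-shattering condition trivially for $q \geq 3$.

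The reason the weaker anti-shattering condition suffices in the induction for the increasing-chain lemma is worth spelling out, because your sketch of that induction is also not quite right. The heavy class on $U$ is chosen by pigeonholing over all $q^c$ \emph{values} $u_0 \in [q]^U$ (not over the $2^c$ zero/nonzero patterns, which would not be closed under the translation $w = (-u_0, w')$). After recursing on $\{v|_{[\ell]\setminus U} : v|_U = u_0\}$, the final chain element $v_d$ must satisfy $(v_d + w)|_U$ having full support, i.e.\ $(v_d)_i \neq (u_0)_i$ for every $i \in U$. This is exactly what the anti-shattering condition gives you by taking $u = u_0$; zero/nonzero shattering only hands you a $v_d$ that is nonzero on $U$, which is the wrong requirement whenever $u_0 \neq 0$. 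So the gap is concentrated in the Sauer-Shelah step: the statement you would need to prove is false, and the correct weaker statement requires a different (dual) formulation.
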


The proof of Theorem~\ref{thm:qcbound} proceeds as before, by bounding the probability via a large $c$-increasing
chain. The $c$-increasing chain itself is found in an analog of Lemma~\ref{lem:cinc} for $q$-ary alphabet. We first need a definition.

\begin{definition}
A sequence of vectors $v_1, \ldots, v_d \in [q]^{\ell}$ is called an $c$-increasing chain of length $d$,
if for all $j \in [d]$, 
$$\left|\supp(v_j) \setminus \left(\bigcup_{i = 1}^{j-1}\supp(v_i)\right)\right| \geq c.$$
\end{definition}

Now we have the following lemma.

\begin{lemma}[$q$-ary increasing chains Ramsey]
\label{lem:qcinc}
For every prime power $q$, and all positive integers $c,\ell$ and $L \le q^\ell$, the following holds. For every $S \subseteq \F_q^\ell$ with $|S| = L$, there is a
$w \in \F_q^\ell$ such that
$S+ w$ has an $c$-increasing chain of length at least $\frac{1}{c}\log_q \bigl(\frac{L}{2}\bigr) - (1 - \frac{1}{c})\log_q ((q-1)\ell)$.
\end{lemma}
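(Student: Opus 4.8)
\textbf{Proof plan for Lemma~\ref{lem:qcinc}.}
The plan is to mimic the proof of Lemma~\ref{lem:cinc} by induction on $\ell$, replacing the ingredient provided there by the classical Sauer--Shelah lemma with a $q$-ary analog tailored to our needs. As noted in the excerpt, the textbook $q$-ary Sauer--Shelah lemma (which shatters a set of coordinates in the strong sense of realizing all $q^c$ patterns) is too weak quantitatively for our purposes, so first I would isolate exactly what property of the shattered coordinates the binary induction actually uses. Inspecting the binary proof, what is needed is: a set $U$ of $c$ coordinates and two values in $\F_q^U$ realized on $U$ by points of $S$ that are ``opposite'' in the sense that they differ in \emph{every} coordinate of $U$ --- in the binary case these were $u_0$ and $\neg u_0$. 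The point $v_d$ with $(v_d)|_U$ equal to this opposite value contributes a fresh block of $c$ new support coordinates after the translate $w=(u_0,w')$, because $v_d+w$ is nonzero on all of $U$ while the earlier chain elements $v_1,\dots,v_{d'}$ (built from $S'$) have support disjoint from $U$.

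\textbf{The $q$-ary Sauer--Shelah variant.} The key step, then, is to prove: if $S \subseteq \F_q^\ell$ has $|S| > 2((q-1)\ell)^{c-1}$, then there is a set $U\subseteq[\ell]$ with $|U|=c$ and two vectors $a,b\in\F_q^U$, both of the form $v|_U$ for some $v\in S$, with $a_i \ne b_i$ for every $i\in U$. I would prove this by a direct counting/extremal argument: define a set system to be ``traceable'' if it has no such pair of everywhere-different traces on any $c$-subset, and show by induction on $\ell$ that traceable families have size at most $2((q-1)\ell)^{c-1}$. The natural recursion fixes coordinate $\ell$: partition $S$ by its last coordinate into classes $S^{(0)},\dots,S^{(q-1)}$; the union of any two of these classes (after deleting coordinate $\ell$) must still be traceable for $(c-1)$-subsets of $[\ell-1]$ in the appropriate sense (otherwise coordinate $\ell$ together with a good $(c-1)$-subset gives a forbidden configuration), and the ``link'' argument bounds the overlap. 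Choosing the bound $2((q-1)\ell)^{c-1}$ rather than $2\ell^{c-1}$ is precisely what absorbs the factor $q-1$ coming from having $q$ possible last-coordinate values instead of $2$; this is why the final chain-length estimate degrades from $(1-\tfrac1c)\log\ell$ to $(1-\tfrac1c)\log_q((q-1)\ell)$.

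\textbf{Assembling the induction.} Given this variant, the induction for Lemma~\ref{lem:qcinc} runs exactly as in the binary case. If $L \le 2((q-1)\ell)^{c-1}$ the claimed length is $\le 0$ and there is nothing to prove; otherwise apply the variant to get $U$, $a$, $b$. Pick $a$ to be the value (among the two) whose preimage class $\{v\in S: v|_U = a\}$ has size at least $|S|/q^{c}$ --- actually it suffices to take the larger of the two classes for $a$ and $b$, giving size $\ge |S|/(2q^c)$, but $|S|/q^c$ suffices if we just pick the majority trace and note $b$ is still realized. Let $S'$ be the restriction to $[\ell]\setminus U$ of that class; $|S'| \ge L/q^c$ wait --- here I would be slightly careful and take the class of whichever of $a,b$ is larger, so $|S'|\ge L/(2q^c)$, matching the $L/2^{c+1}$ bookkeeping of the binary proof with $2^{c+1}$ replaced by $2q^c$. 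Apply the induction hypothesis on $\ell-c$ coordinates to $S'$ to get $w'$ and a chain $v_1',\dots,v_{d'}'$ of length $d' \ge \tfrac1c\log_q(L/(2q^c)) - (1-\tfrac1c)\log_q((q-1)(\ell-c))$. Set $w=(a,w')$, lift $v_i=(a,v_i')$ for $i\in[d']$, and let $v_d$ be any element of $S$ with $v_d|_U = b$; since $v_d+w$ is nonzero on all of $U$ and the earlier $v_i+w$ have support disjoint from $U$, the new block of fresh support has size $|U|=c$. Finally $d=d'+1 \ge \tfrac1c\log_q(L/2) - (1-\tfrac1c)\log_q((q-1)\ell)$ after simplifying, completing the induction.

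\textbf{Main obstacle.} The routine part is the lifting/bookkeeping; the substantive work is proving the non-standard $q$-ary Sauer--Shelah variant with the \emph{everywhere-different pair of traces} conclusion and with the right bound $2((q-1)\ell)^{c-1}$. I expect the trickiest point there to be setting up the recursion so that the ``link'' at coordinate $\ell$ interacts correctly with the everywhere-different requirement --- one must ensure that a forbidden $(c-1)$-configuration in a union of two last-coordinate classes, combined with coordinate $\ell$ (on which those two classes differ everywhere by construction), genuinely yields a forbidden $c$-configuration in $S$. Getting the induction invariant phrased so this implication is clean, and so that the constant and the $(q-1)$ factor come out exactly, is the heart of the argument.
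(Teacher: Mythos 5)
Your overall strategy is essentially the paper's: induct on $\ell$ using a non-standard $q$-ary Sauer--Shelah lemma with threshold $2((q-1)\ell)^{c-1}$ whose conclusion involves traces on a $c$-set $U$ that differ in \emph{every} coordinate, then split $S$ by the majority trace on $U$, recurse on the preimage class, and append one extra chain element coming from an ``opposite'' vector. (One cosmetic slip: the translate should be $w=(-a,w')$ rather than $w=(a,w')$ for $q>2$, so that $(v_i+w)|_U=0$ and $(v_d+w)|_U$ has full support.)

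However, your formulation of the $q$-ary Sauer--Shelah variant is too weak for the assembly step, and the hedging in your third paragraph shows you sensed this. You require only that \emph{some} pair $a,b$ of realized traces on $U$ differ in every coordinate. But then $a$ and $b$ need not be majority traces, so neither preimage class $\{v\in S: v|_U=a\}$ nor $\{v\in S: v|_U=b\}$ is guaranteed to have size $\Omega(L/q^c)$; ``take the larger of the two classes'' does not help, since both could be singletons. Conversely, if you instead let $u_0$ be the majority trace (so $|S'|\ge L/q^c$), your lemma gives you no vector in $S$ whose trace differs from $u_0$ in every coordinate: $u_0$ may be neither $a$ nor $b$, and ``$b$ is still realized'' does not save you because $b-u_0$ need not have full support on $U$. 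What the induction actually needs --- and what the paper's Lemma~\ref{lem:qsauer} proves --- is the uniform version: there is $U$ of size $c$ such that for \emph{every} $u\in[q]^U$ there exists $v\in S$ with $v|_U$ differing from $u$ in every coordinate. With that form, you first pick $u_0$ by averaging, and then apply the lemma to the specific target $u=u_0$ to obtain $v_d$; the two choices decouple cleanly. Your ``traceable families'' recursion should be aimed at this stronger conclusion (the paper does so by splitting on $S_1$ = projections having at least one fiber and $S_2$ = projections having at least two fibers, which yields exactly the bound $2((q-1)\ell)^{c-1}$).
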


The proof of Lemma~\ref{lem:qcinc} needs a non-standard generalization of the Sauer-Shelah lemma to larger alphabet described in the next section.

\subsection{A $q$-ary Sauer-Shelah lemma}

The traditional generalization of the Sauer-Shelah lemma to large
alphabets is the Karpovsky-Milman lemma~\cite{KMqary}, which roughly states that
given $S \subseteq [q]^\ell$ of cardinality at least $(q-1)^l
l^{c-1}$, there is a set $U$ of $c$ coordinates such that for every $u
\in [q]^U$, there is some $v \in S$ such that the restriction $v|_U$
equals $u$.  Applying this lemma in our context, once $q > 2$,
requires us to have a set $S > 2^\ell$, which turns out to lead to
exponential list size bounds. Fortunately, the actual property needed
for us is slightly different. We want a bound $B$ (ideally polynomial
in $\ell$) such that for any set $S \subseteq [q]^\ell$ of cardinality
at least $B$, there is a set $U$ of $c$ coordinates such that for
every $u \in [q]^U$, there is some $v \in S$ such that the restriction
$v|_U$ {\em differs from $u$ in every coordinate of $U$}.  It turns
out that this weakened requirement admits polynomial-sized $B$.

We state and prove this generalization of the Sauer-Shelah lemma below.

\begin{lemma}[$q$-ary Sauer-Shelah]
\label{lem:qsauer}
For all integers $q, \ell, c$, for any set $S \subseteq [q]^\ell$, if $|S| > 2 \cdot ((q-1) \cdot \ell)^{c-1}$, then
there exists some set of coordinates $U \subseteq [\ell]$ with $|U| = c$ such that for every $u \in [q]^U$,
there exists some $v \in S$ such that $u$ and $v|_{U}$ differ in every coordinate.
\end{lemma}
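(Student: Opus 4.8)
The plan is to adapt the standard inductive (project-onto-one-coordinate) proof of the Sauer--Shelah lemma, proving the contrapositive by induction on $\ell$: if no $c$-subset of $[\ell]$ has the asserted property, then $|S| \le 2((q-1)\ell)^{c-1}$. Call a set $U \subseteq [\ell]$ \emph{good} (for $S$) if for every $u \in [q]^U$ there is a $v \in S$ with $v|_U$ differing from $u$ in every coordinate of $U$; the goal is thus to bound $|S|$ when $S$ has no good set of size $c$. For $\ell < c$ the statement is vacuous, since for every prime power $q$ one checks $q^\ell \le 2((q-1)\ell)^{c-1}$ in that range; and for $c=1$ it says merely that if every coordinate is constant on $S$ then $|S|\le 1$. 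So assume $\ell \ge c \ge 2$.

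For the inductive step, split on the last coordinate. Put $S' = \{\, v|_{[\ell-1]} : v\in S\,\}$ and $S'' = \{\, w \in [q]^{[\ell-1]} : (w,a)\in S \text{ for at least two distinct } a\in[q]\,\}$. Writing $m_w = |\{a : (w,a)\in S\}|$ for $w \in S'$, we get $|S| = \sum_{w\in S'} m_w = |S'| + \sum_{w\in S'}(m_w-1) \le |S'| + (q-1)\,|S''|$, since $m_w - 1$ is positive only when $w\in S''$ and never exceeds $q-1$. Next, I claim $S'$ has no good set of size $c$ and $S''$ has no good set of size $c-1$. The first is immediate: a good $T \subseteq [\ell-1]$ for $S'$ is good for $S$, because every witness in $S'$ is the restriction of some $v\in S$ agreeing with it on $T$. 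For the second, if $T' \subseteq [\ell-1]$ is good for $S''$ then $T'\cup\{\ell\}$ is good for $S$: given $(u',u_\ell)\in[q]^{T'\cup\{\ell\}}$, choose $w\in S''$ with $w|_{T'}$ differing from $u'$ everywhere, and since $w$ has at least two extensions in $S$, pick one $(w,a)\in S$ with $a\ne u_\ell$; then $(w,a)|_{T'\cup\{\ell\}}$ differs from $(u',u_\ell)$ in every coordinate. Applying the induction hypothesis to $S'$ and $S''$,
\[ |S| \le 2\big((q-1)(\ell-1)\big)^{c-1} + (q-1)\cdot 2\big((q-1)(\ell-1)\big)^{c-2} = 2(q-1)^{c-1}(\ell-1)^{c-2}\,\ell \le 2\big((q-1)\ell\big)^{c-1}, \]
using $(\ell-1)^{c-2}\le \ell^{c-2}$, which closes the induction.

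The step needing the most care is the lift of a good set for $S''$: one must verify that adjoining the new coordinate $\ell$ preserves goodness even under the "differ in every coordinate" requirement. This is precisely where that formulation (rather than full shattering, as in Karpovsky--Milman) is essential — because $w$ has $\ge 2$ extensions in $S$, at least one of them avoids the \emph{single} forbidden value $u_\ell$ in coordinate $\ell$, whereas full shattering would demand all $q$ extensions and push the threshold up to $(q-1)^\ell\ell^{c-1}$, which is exponential in $\ell$ and useless for the application. The remaining ingredients — the counting bound $|S|\le|S'|+(q-1)|S''|$, the arithmetic above, and the vacuity check for $\ell<c$ — are routine.
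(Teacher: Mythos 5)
Your proof is correct and follows essentially the same route as the paper's: project onto the last coordinate, form the ``appears at least once'' and ``appears at least twice'' sets (your $S'$, $S''$; the paper's $S_1$, $S_2$), use the count $|S|\le|S'|+(q-1)|S''|$, and lift a good $(c-1)$-set for $S''$ by adjoining the new coordinate, exploiting that two extensions let one dodge the single forbidden value. The only difference is presentational: you argue the contrapositive and apply the induction hypothesis to both $S'$ and $S''$ simultaneously, whereas the paper argues directly with a case split on which of the two bounds the pigeonhole forces — the arithmetic and the crucial lifting step are identical.
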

\begin{proof}
We prove this by induction on $\ell$ and $c$.
If $c= 1$, then $|S| > 2$ and the result holds by letting $U$ equal any coordinate on which
not all elements of $S$ agree.

Now assume $c > 1$.
Represent an element $x$ of $[q]^{\ell}$ as a pair $(y, b)$, where $y \in [q]^{\ell-1}$ consists of the first
$\ell-1$ coordinates of $x$ and $b \in [q]$ is the last coordinate of $x$.

Consider the following subsets of $[q]^{\ell-1}$.
$$S_1  = \{ y \in [q]^{\ell-1} \mid \mbox{ for at least 1 value of $b \in [q]$, $(y, b) \in S$}\}.$$
$$S_2  = \{ y \in [q]^{\ell-1} \mid \mbox{ for at least 2 values of $b \in [q]$, $(y, b) \in S$}\}.$$

\noindent
Note that $|S| \leq  (|S_1|-|S_2|) + q |S_2| = |S_1| + (q-1) |S_2|$.
By assumption,  
$$|S| > 2 \cdot ((q-1) \cdot \ell)^{c-1} \geq 2 \cdot ((q-1) \cdot (\ell-1))^{c-1} + (q-1) \left( 2 \cdot ((q-1) \cdot (\ell-1))^{c-2}\right),$$
(using the elementary inequality $\ell^{c-1} \geq (\ell-1)^{c-1} +
(\ell-1)^{c-2}$).  Thus, either $|S_1| > 2\cdot ((q-1) \cdot
(\ell-1))^{c-1}$, or else $|S_2| > 2 \cdot ((q-1) \cdot
(\ell-1))^{c-2}$.

\noindent
We now prove the desired claim in each of these cases.

\smallskip
\noindent {\bf Case 1:} $|S_1| > 2\cdot ((q-1) \cdot (\ell-1))^{c-1}$. In this case,
we can apply the induction hypothesis to $S_1$ with parameters $\ell-1$ and $c$, and get a 
subset of $U$ of $[\ell-1]$ of cardinality $c$. Then the set $U$ has the desired property.

\smallskip
\noindent {\bf Case 2:} $|S_2| > 2 \cdot ((q-1) \cdot
(\ell-1))^{c-2}$. In this case, we apply the induction hypothesis to
$S_2$ with parameters $\ell-1$ and $c-1$, and get a subset $U$ of
$[\ell -1]$ of cardinality $c-1$. Then the set $U \cup \{ \ell\}$ has
the desired property. Indeed, take any vector $u \in [q]^{U \cup
  \{\ell\}}$.  Let $u' = u |_U$. By the induction hypothesis, we know
that there is a $v \in S_2$ such that $v|_U$ differs from $u'$ in
every coordinate of $U$. Now we know that there are at least two $b
\in [q]$ such that $(v, b) \in S$. At least one of these $b$ will be
such that $(v, b)$ differs from $u$ in every coordinate of $U \cup
\{\ell\}$, as desired.
\end{proof}

In the next section, we use the above lemma to prove the
Ramsey-theoretic $q$-ary increasing chain claim
(Lemma~\ref{lem:qcinc}).


\section{Proof of $q$-ary increasing chain lemma}
\label{app:q-ary-proof}

In this section, we prove Lemma~\ref{lem:qcinc}, which we restate
below for convenience.

\smallskip
\noindent {\bf Lemma~\ref{lem:qcinc} (restated)}\ \ {\it For every
  prime power $q$, and all positive integers $c,\ell$ and $L \le
  q^\ell$, the following holds. For every $S \subseteq \F_q^\ell$ with
  $|S| = L$, there is a $w \in \F_q^\ell$ such that $S+ w$ has an
  $c$-increasing chain of length at least $\frac{1}{c}\log_q
  \bigl(\frac{L}{2}\bigr) - (1 - \frac{1}{c})\log_q ((q-1)\ell)$.  }

\begin{proof}
We prove this by induction on $\ell$. The claim holds trivially for $\ell \le c$, so assume $\ell > c$. 

If $L \leq 2 ((q-1) \cdot \ell)^{c-1}$, then again the lemma holds
trivially.  Otherwise, by Lemma~\ref{lem:qsauer} we get a set $U$ of
$c$ coordinates such that for each $u \in \F_q^U$, there is some $v
\in S$ such that $v|_U$ differs from $u$ in every coordinate.  We will
represent elements of $\F_q^{\ell}$ in the form $(u, v')$ where $u \in
\F_q^U$ and $v' \in \F_q^{[\ell]\setminus U}$.

Let $u_0 \in \F_q^U$ be a vector such that $|\{ v \in S \mid v|_{U} = u_0\}|$ is
at least $L / q^c$ (we know that such a $u$ exists by averaging).
Let $S' \subseteq \F_q^{[\ell]\setminus U}$ be given by
$S' = \{ v|_{[\ell]\setminus U} \mid v|_{U} = u\}$.
By choice of $u$, we have $|S'| \geq L/q^c$.

\noindent By the induction hypothesis, for 
\[ d' \geq \frac{1}{c} \log \Bigl(\frac{L}{2q^{c}}\Bigr) - \Bigl( 1- \frac{1}{c}\Bigr) \log ((q-1)(\ell - c)) \ , \]
there exist $w' \in \F_q^{\ell-c}$ and $v'_1, \ldots, v'_{d'} \in S'$ such
that for each $j \in [d']$,
$$\biggl|\supp(v'_j + w') \setminus \Bigl(\bigcup_{i = 1}^{j-1}\supp(v'_i + w')\Bigr)\biggr| \geq c.$$

Let $d=d'+1$. Note that 
\[ d \ge \frac{1}{c}\log_q\Bigl(
  \frac{L}{2}\Bigr) - \Bigl(1 - \frac{1}{c}\Bigr)\log_q ((q-1)\ell) \ .\]
For $i \in [d']$, let $v_i = (u_0, v'_i) \in \F_q^{\ell}$. Let $v_d$ be any vector in $S$ where $(v_d)|_U$ differs
from $u_0$ in every coordinate of $U$. Let $w = (-u_0, w')$.
We claim that $w$ and $v_1, \ldots, v_d$ satisfy the desired properties. 

\noindent Indeed, for each $j \in [d']$, we have
\begin{align*}
\biggl|\supp(v_j + w) \setminus \Bigl(\bigcup_{i = 1}^{j-1}\supp(v_i + w)\Bigr)\biggr| &=
\biggl|\supp(v'_j + w') \setminus \Bigl(\bigcup_{i = 1}^{j-1}\supp(v'_i + w')\Bigr)\biggr| \geq c.
\end{align*}
\noindent Also, 
$$\biggl|\supp(v_d + w) \setminus \Bigl(\bigcup_{i = 1}^{d-1}\supp(v_i + w)\Bigr)\biggr| \geq \Big| \supp(v_d + w) \setminus ([\ell] \setminus U )\Bigr| = |U| = c.$$
\noindent Thus for all $j \in [d]$, we have 
$$\biggl|\supp(v_j + w) \setminus \Bigl(\bigcup_{i = 1}^{j-1}\supp(v_i + w)\Bigr)\biggr| \geq c,$$
as desired.
\end{proof}


Given Lemma~\ref{lem:qcinc}, the proof of
Theorem~\ref{thm:qcbound} is virtually identical to the proof of its
binary analog Theorem~\ref{thm:cbound}. Theorem~\ref{thm:qmain} can
then be proved (using Theorem~\ref{thm:qcbound}) in the same manner as
Theorem~\ref{thm:main} was proved.

\section*{Acknowledgements}

Some of this work was done when we were all participating in the Dagstuhl seminar 09441
on constraint satisfaction. We thank the
organizers of the seminar for inviting us, and Schloss Dagstuhl for the wonderful hospitality.

\bibliographystyle{alpha}
\bibliography{random-lin-ld}

\end{document}